\newcommand{\IR}{{\mathbb{R}}}
\newcommand{\ZZ}{{\mathbb{Z}}}
\newcommand{\Om}{{\Omega}}
\newcommand{\vfi}{{\varphi}}
\newcounter{smalllist}
\newtheorem{theorem}{Theorem}
\newtheorem{lemma}{Lemma}[section]
\newtheorem{prop}[lemma]{Proposition}
\theoremstyle{definition}
\newtheorem{remark}[lemma]{Remark}
\let\Re=\undefined\DeclareMathOperator{\Re}{Re}
\let\llldots=\ldots
\def\ldots{\llldots{}}
\numberwithin{equation}{section}
\begin{document}

\title[Self-adjointness on bounded domains]{On confining potentials and essential self-adjointness for Schr\"{o}dinger operators on
bounded domains in $\IR^n$}
\author[Gh.~Nenciu and I.~Nenciu]{Gheorghe Nenciu and Irina Nenciu}
\address{Gheorghe Nenciu\\
Department of Theoretical Physics and Mathematics, University of
Bucharest\\
P.O. Box MG 11, RO-077125 Bucharest, Romania \textit{and}
         Institute of Mathematics ``Simion Stoilow''
     of the Romanian Academy\\ 21, Calea Grivi\c tei\\010702-Bucharest, Sector 1\\Romania}
\email{Gheorghe.Nenciu@imar.ro}
\address{Irina Nenciu\\
         Department of Mathematics, Statistics and Computer Science\\
         University of Illinois at Chicago\\
         851 S. Morgan Street\\
         Chicago, IL \textit{and} Institute of Mathematics ``Simion Stoilow''
     of the Romanian Academy\\ 21, Calea Grivi\c tei\\010702-Bucharest, Sector 1\\Romania}
\email{nenciu@math.uic.edu}
%\date{\today}

\begin{abstract}
Let $\Om$ be a bounded domain in $\IR^n$ with $C^2$-smooth boundary, $\partial\Om$,
of co-dimension 1, and let $H=-\Delta +V(x)$ be a Schr\"odinger operator on $\Om$ with potential $V \in L^{\infty}_{loc}(\Om )$. We seek the weakest conditions we can find on the rate of growth of the potential $V$ close to the boundary $\partial\Om$ which guarantee essential self-adjointness of $H$ on $C_0^\infty(\Om)$ .  As a special case of an abstract condition, we add optimal logarithmic type corrections to the known condition  $V(x)\geq \frac{3}{4d(x)^2}$ where  $d(x)=\text{dist}(x,\partial\Om)$. More precisely,
 we show that
if, as $x$ approaches $\partial\Om$,
$$
V(x)\geq \frac{1}{d(x)^2}\biggl(\frac34-\frac{1}{\ln(d(x)^{-1})}-
\frac{1}{\ln(d(x)^{-1})\cdot\ln\ln(d(x)^{-1})}-\cdots\biggr)
$$
where  the brackets
contain an arbitrary finite number of logarithmic terms, then $H$
is essentially self-adjoint on $C_0^\infty(\Om)$. The constant 1
in front of each logarithmic term is optimal. The proof is based
on a refined Agmon exponential estimate combined with a well known multidimensional Hardy
inequality.
\end{abstract}
\thanks{We wish to thank F.~Gesztesy, A.~Laptev, M.~Loss and B.~Simon for useful comments and suggestions. I.N.'s research was partly supported by the NSF grant DMS 0701026.}
\maketitle

\tableofcontents

%%%%%%%%%%%%%%%%%%%%%%%%%%%%%%%%%%%%%%%%%%%%%%%%%%%%%%%%%%%%%%%%%%%%%%%%%%%%%%%%%%%%%%%%%%%%%%%%%%
\section{Introduction}\label{S:1}
%%%%%%%%%%%%%%%%%%%%%%%%%%%%%%%%%%%%%%%%%%%%%%%%%%%%%%%%%%%%%%%%%%%%%%%%%%%%%%%%%%%%%%%%%%%%%%%%%%%%%%%

Consider a particle in a bounded domain $\Om$ in $\IR^n$, $n\geq1$, in the presence of a potential $V$.
At the heuristic level,
if $V(x)\to\infty$ as $x$ approaches the boundary $\partial\Om$, then the particle is confined
in $\Om$ and never visits the boundary.
At the classical level, this indeed happens when $V(x)\to\infty$ as $x\to\partial\Om$
(see, e.g. \cite[Theorem~X.5]{ReeSim}). At the quantum level, the problem is
much more complicated due to the possibility that the particle tunnels through the infinite
potential barrier and ``sees'' the boundary. The fact
that the particle never feels the boundary amounts to saying that $V$ determines completely
the dynamics: there is no need for boundary
conditions. At the mathematical level, by Stone's Theorem, the problem is then finding conditions
on the rate of growth of $V(x)$ as $x\to\partial\Om$
which ensure that the Schr\"{o}dinger operator
\begin{equation}\label{E:I1}
H=-\Delta+V
\end{equation}
is essentially self-adjoint on $C_0^\infty(\Om)$. Let us note here that oscillations of the potential
could also play a role in the essential self-adjointness problem due to the possibility of coherent 
reflections by an appropriately chosen sequence of potential barriers (see \cite{ReeSim}, the 
Appendix to Chapter~X.1). In this paper we will
not consider oscillatory potentials, but rather focus on potentials which grow
to infinity at the boundary of the domain.

The problem has a long and distinguished history; for details and
further references, we send the reader to \cite{CL} and
\cite{ReeSim} and the review papers \cite{KSW, BMS, Bru}. In the
1-dimensional case (say, $\Om=(0,1)$) there exists a
well-developed theory of essential self-adjointness of
Sturm-Liouville operators, which is based on limit point/limit
circle Weyl type criteria (see e.g. \cite{CL}, \cite{ReeSim} and
the references therein). In particular if, under appropriate
regularity conditions,
\begin{equation}\label{E:I2}
V(x)\geq\frac34\cdot\frac{1}{d(x)^2}\,,
\end{equation}
where $d(x)=\text{dist}\bigl(x,\{0,1\}\bigr)$, then $H$ is essentially self-adjoint on $C_0^\infty(0,1)$.
The constant $\frac34$ is optimal, in the sense
that if for some $\varepsilon>0$,
$$
0\leq V(x)\leq \biggl(\frac34-\varepsilon\biggr)\cdot\frac{1}{d(x)^2}\,,
$$
near 0 and/or 1, then $H$ is not essentially self-adjoint on $C_0^\infty(0,1)$ (see Theorem~X.10 in
\cite{ReeSim}). Many results have been generalized
from one to higher dimensions -- see, for example, a comprehensive review of these results in \cite{Bru}.
In particular, if $\Om$ is a bounded domain with $C^2$ boundary $\partial\Om$
of codimension 1, and if $V$ satisfies \eqref{E:I2} as $x$ approaches
$\partial\Om$, with $d(x)=\text{dist}(x,\partial\Om)$,
then $H$ defined as in \eqref{E:I1} is essentially self-adjoint on $C_0^\infty(\Om)$.
Moreover, Theorem~6.2 in \cite{Bru}
implies that for the case at hand the essential self-adjointness of $H$ is assured by a
weaker condition, namely
\begin{equation}\label{E:I4}
V(x)\geq\frac34\cdot\frac{1}{d(x)^2} -\frac{c}{d(x)}\,
\end{equation}
with some $c\in\IR^+$. This raises the following optimality question:
While among power-type growth conditions, $\frac34\cdot\frac{1}{d(x)^2}$ is optimal both in the exponent
and in the constant, does a growth condition of the type
$$
V(x)\geq \frac34\cdot\frac{1}{d(x)^2}\biggl(1-m\bigl(d(x)\bigr)\bigg)\,,\qquad \lim_{t\to0+} m(t)=0,
\qquad m(t) \geq 0
$$
still imply essential self-adjointness of $H$? It turns out  that
this is false -- see the counterexample in the proof of
Theorem~\ref{T:pOptimal}. So the question of optimality should be
refined to asking whether $\frac34\cdot\frac{1}{d(x)^2}$ is the
leading term of a (possibly formal) asymptotic expansion near
$\partial\Om$ of a critical potential $V_c$ such that $V\geq V_c$
near $\partial\Om$ implies essential self-adjointness of $H$ on
$C_0^\infty(\Om)$. This would amount to finding the form and size
of sub-leading terms in the asymptotic expansion of $V_c$.

The main result of this note is the affirmative answer to this optimality question.
Namely, we show that for bounded domains $\Om$ in $\IR^n$, $n=1,2,3,...$
having $C^2$ boundary of codimension 1, and for potentials $V$ satisfying
\begin{equation}\label{E:I5}
V(x)\geq \frac{1}{d(x)^2}\biggl(\frac34-\frac{1}{\ln(d(x)^{-1})}-\frac{1}{\ln(d(x)^{-1})
\cdot\ln\ln(d(x)^{-1})}-\cdots\biggr)
\end{equation}
as $x$ approaches $\partial\Om$, the Schr\"odinger operator $H$ is
essentially self-adjoint on $C_0^\infty(\Om)$, and that the
constants 1 in front of each logarithmic term on the right-hand
side of \eqref{E:I5} are optimal (for a precise statement, see
Theorem~\ref{T:pOptimal}).

Two remarks are in order here. The first one is that we are
interested in optimality rather than generality. Accordingly, and
also in order not to obscure the main ideas of our proofs by
technicalities, we consider the simplest case, which is still the
most interesting from a physical point of view: a bounded domain
$\Om\subset\IR^n$ with $C^2$ boundary of co-dimension 1; in
addition, we only consider scalar Schr\"odinger operators with
regular ($L^\infty_\text{loc}$) potentials. In this setting
the proofs are short and elementary. At the price of
technicalities, one may be able to extend the results of the
present note to more general situations, e.g. boundaries with
components of higher co-dimension, local singularities of the
potential or second order elliptic operators of general
form. Reducing the regularity of the boundary $\partial\Om$ below
$C^2$ seems to require a finer analysis -- in particular, of
multidimensional Hardy inequalities on domains with 
less smooth boundaries (see e.g. \cite{Dav}, \cite{Dav1}, \cite{LS} and references 
therein for results in this area). In addition, we consider only
what one can think of as the ``isotropic'' case, i.e. we seek
conditions on $V(x)$ which depend only on $d(x)$, and not on the
specific point $x_0$ of the boundary that $x$ approaches, or the
direction along which $x\to x_0$.

The second remark concerns the method of proof. While the proofs
in \cite{Bru} are based on his theory of semimaximal operators,
our method of proof  is based on the observation that essential
self-adjointness follows (via the fundamental criterion for
self-adjointness, see, e.g., \cite{ReeSim}, \cite{AkhGla}) from
Agmon type results on exponential decay of eigenfunctions (see
\cite[Theorem~1.5a]{Agm}). As stated, the result in \cite{Agm}
does not lead to optimal growth conditions on the potential. One
has both to strengthen the exponential decay estimates, and to
combine them with multidimensional Hardy inequalities \cite{Dav}.
So our basic technical result is an exponential estimate of
Agmon-type -- see Theorem~\ref{T:2}. Here the point is that
our condition ($\Sigma.2$) below is strictly weaker than the
corresponding condition (3.12) from Brusentsev~\cite{Bru}.

The paper is organized as follows. In Section~\ref{S:2} we state
the problem and the main results. Section~\ref{S:3} contains the
proof of the Agmon-type Theorem~\ref{T:2}. While some of the
results in this section go back to Agmon \cite{Agm} and are
well-known (e.g. the identity in Lemma~\ref{L:2}), we give
complete proofs for the reader's convenience. Finally
Section~\ref{S:4} contains the proof of Theorems~\ref{T:1}
and~\ref{T:p}.

%%%%%%%%%%%%%%%%%%%%%%%%%%%%%%%%%%%%%%%%%%%%%%%%%%%%%%%%%%%%%%%%%%%%%%
\section{Main results}\label{S:2}
%%%%%%%%%%%%%%%%%%%%%%%%%%%%%%%%%%%%%%%%%%%%%%%%%%%%%%%%%%%%%%%%%%%%%%

Let $\Om$ be a bounded domain in $\IR^n$, $n\geq1$, with
$C^2$-smooth boundary, $\partial\Om$, of co-dimension 1. We
consider the function
\begin{equation}\label{D:dist}
d(x)=\text{dist}(x,\partial\Om),\qquad\text{for}\quad x\in\Om\,,
\end{equation}
where ``$\text{dist}$'' denotes the usual, Euclidean distance in
$\IR^n$. As is well-known (see, for example, the Appendix to
Chapter~14 in \cite{GilTru}), $d$ is Lipshitz and differentiable
a.e. in $\Om$. More importantly for us here, there exists a
constant
\begin{equation*}\label{E:212}
d_\Om>0\qquad \text{(depending only on the domain $\Om$)}
\end{equation*}
such that for $x\in\Om$ with $d(x)<d_\Om$, $d$ is twice-differentiable and
\begin{equation}\label{E:211}
|\nabla d(x)|\leq1\,.
\end{equation}

\begin{remark}
Actually  $|\nabla
d(x)|=1$  for $x\in\Om$ with $d(x)<d_\Om$, see for example
\cite{GilTru}, or Lemma 6.2 in \cite{Bru}, but in the proofs below
we use only \eqref{E:211}.
\end{remark}

In $\Om$ we consider the Schr\"odinger operator $H=-\Delta+V$ with
$V\in L^\infty_\text{loc}(\Om)$, defined on $\mathcal
D(H)=C_0^\infty(\Om)$. As explained in the Introduction, we are
seeking growth conditions on $V$ close to $\partial\Om$ ensuring
essential self-adjointness of $H$. These will be given in terms of
functions $G$ described below:

\textbf{Condition ($\Sigma$).} \textit{A function $G\,:\,
(0,\infty)\to \IR$ is said to satisfy condition
{\em(}$\Sigma${\em)} if it is $C^1(0,\infty)$ and such that:
\begin{enumerate}
\item[($\Sigma.1$)] There exists $d_0>0$, $d_0\leq d_\Om$, such that
\begin{equation*}\label{E:sigma1}
0\leq G'(t)\leq \frac1t,\qquad \text{for}\quad t\in(0,d_0)\quad\text{and}
\end{equation*}
\begin{equation*}\label{E:sigma2}
G'(t)=0\qquad \text{for}\quad t\geq d_0\,.
\end{equation*}
\item[($\Sigma.2$)] For any $\rho_0\leq \frac{d_0}{2}$,
\begin{equation}\label{E:sigma3}
\sum_{n=1}^\infty 4^{-n}e^{-2G(2^{-n}\rho_0)} =\infty
\end{equation}
\end{enumerate}}

We can now formulate our main result:
\begin{theorem}\label{T:1}
Consider an open, bounded domain $\Omega\subset\IR^n$ with $C^2$-smooth boundary, and the Schr\"{o}dinger
operator
\begin{equation}
H=-\Delta+V,
\end{equation}
with $V\in L^\infty_\emph{loc}$ and domain $\mathcal{D}(H)=C_0^\infty(\Omega)$.
Assume that there exists a function $G$ satisfying condition $(\Sigma)$ such that
\begin{equation}\label{E:2}
V=V_1+V_2,
\end{equation}
where
\begin{equation}\label{E:3}
V_1(x) +\frac14 \cdot\frac{1}{d(x)^2} \geq G'(d(x))^2\quad\text{and}\quad V_2\in L^\infty(\Om)\,.
\end{equation}

Then $H$ is essentially self-adjoint in $L^2(\Om)$.
\end{theorem}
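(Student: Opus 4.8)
The plan is to derive essential self-adjointness from the fundamental criterion: it suffices to show that for some (equivalently, any) $\lambda>0$, the only $u\in L^2(\Om)$ solving $(-\Delta+V+\lambda)u=0$ in the distributional sense is $u=0$. So I would start with such a $u$, which by elliptic regularity (and $V\in L^\infty_{\mathrm{loc}}$) lies in $H^2_{\mathrm{loc}}(\Om)$. The goal is then to show $u$ must decay so fast near $\partial\Om$ that a multidimensional Hardy inequality forces $u\equiv0$. The mechanism is that $u$ "behaves like an eigenfunction at energy $-\lambda<0$", so an Agmon-type exponential estimate applies: this is precisely what Theorem~\ref{T:2} is set up to deliver. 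I would invoke Theorem~\ref{T:2} with the Agmon-type weight built from $G$ — the whole point of Condition~($\Sigma$) is that $G'(d(x))^2$ is the effective "confining part" of the potential after absorbing the $\tfrac14 d(x)^{-2}$ that Hardy's inequality will supply for free — to conclude that the $G$-weighted quantity $\int_\Om e^{2\kappa\,G(d(x))}(|\nabla u|^2 + |u|^2)\,dx$ is finite (or some closely related weighted norm), at least for $\kappa$ in an appropriate range; here the bounded part $V_2$ just shifts $\lambda$ and is harmless.

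Next I would combine this weighted finiteness with the multidimensional Hardy inequality on $\Om$ (Davies, \cite{Dav}): for $w\in C_0^\infty(\Om)$ (and by density for suitable $w$ vanishing at the boundary),
\begin{equation*}
\int_\Om |\nabla w(x)|^2\,dx \;\geq\; \frac14\int_\Om \frac{|w(x)|^2}{d(x)^2}\,dx.
\end{equation*}
The strategy is to apply this not to $u$ itself but to a regularized/truncated version, and to exploit the substitution $w = e^{G(d(x))} v$ type change of variables: writing $|\nabla u|^2$ in terms of $v=e^{-G(d)}u$ produces a cross term that, after integration by parts, yields exactly the combination $G'(d)^2 - \tfrac14 d^{-2}$ acting against $|v|^2$ (using $|\nabla d|\le 1$ from~\eqref{E:211}), which is controlled by $V_1$ via~\eqref{E:3}. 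This is the standard ground-state-substitution / commutator computation, and it is where the precise form of ($\Sigma.1$) — namely $0\le G'(t)\le 1/t$ — is used to keep all error terms of the right sign or absorbable. The outcome is a differential inequality / energy estimate showing that the weighted mass of $u$ in a boundary collar $\{d(x)<\rho\}$ decays, as $\rho\to0$, faster than the series~\eqref{E:sigma3} allows unless it is identically zero.

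The final step is the summation argument that turns the collar-by-collar decay into a contradiction: partitioning the boundary neighborhood into dyadic shells $\{2^{-n}\rho_0 \le d(x) < 2^{-(n-1)}\rho_0\}$, the Agmon estimate gives that the $L^2$-mass on the $n$-th shell is bounded by (a constant times) $4^{-n}e^{-2G(2^{-n}\rho_0)}$ times a fixed constant coming from an interior region, while the Hardy inequality prevents the mass from being too concentrated; since $\sum 4^{-n}e^{-2G(2^{-n}\rho_0)}=\infty$ by ($\Sigma.2$), reconciling these forces the interior constant to vanish, hence $u\equiv0$ on the collar and then, by unique continuation / connectedness, on all of $\Om$. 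The main obstacle I anticipate is the bookkeeping in the change-of-variables step: tracking the sign and size of every commutator/boundary term so that $V_2\in L^\infty$ and the $|\nabla d|\le 1$ inequality (rather than equality) suffice, and ensuring that the weighted quantities are finite a priori so that all the integrations by parts are legitimate — this typically requires first proving the estimate for truncated weights $\min(G, N)$ and passing $N\to\infty$ by monotone convergence. Everything else is a packaging of Theorem~\ref{T:2} plus Hardy plus the divergence of~\eqref{E:sigma3}.
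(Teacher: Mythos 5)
Your proposal assembles exactly the right ingredients --- the fundamental criterion for self-adjointness, elliptic regularity, the Agmon-type Theorem~\ref{T:2}, the multidimensional Hardy inequality~\eqref{E:S4eq1}, and the dyadic-shell divergence from $(\Sigma.2)$ --- but the way you wire them together does not match what Theorem~\ref{T:2} actually says, and as written it would stall. Theorem~\ref{T:2} does \emph{not} output a weighted-finiteness estimate to be post-processed with Hardy; its conclusion is directly $\psi\equiv0$, and the only hypothesis beyond $(\Sigma)$ is the quadratic-form inequality~\eqref{E:13} for $g=G\circ d$. The one and only place Hardy enters the proof of Theorem~\ref{T:1} is in \emph{verifying}~\eqref{E:13}: for $\varphi\in C_0^\infty(\Om)$,
\begin{equation*}
\langle \varphi,(H-E)\varphi\rangle - \int_\Om |\nabla g|^2|\varphi|^2\,dx
\;\geq\; \int_\Om\Bigl(V_1 + \tfrac{1}{4d^2} - G'(d)^2\Bigr)|\varphi|^2\,dx
+ \bigl(-\|V_2\|_{L^\infty} - A - E\bigr)\|\varphi\|^2\,,
\end{equation*}
where one uses~\eqref{E:S4eq1} to produce the $\tfrac14 d^{-2}$ out of the kinetic term and~\eqref{E:211} to get $|\nabla g|^2\leq G'(d)^2$; then~\eqref{E:3} makes the spatial integrand nonnegative, and $E=-\|V_2\|_{L^\infty}-A-1$ gives~\eqref{E:13} with $c=1$. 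Once that is in hand, Theorem~\ref{T:2} (applied to the weak $L^2$ solution of $(H-E)\psi=0$ produced by the fundamental criterion, upgraded to $H^1_{\mathrm{loc}}$ by interior elliptic regularity as you note) kills $\psi$ in one line. The ground-state substitution $w=e^{G(d)}v$, the truncated-weight passage, the shell-by-shell summation, and the unique-continuation step you sketch in the second and third paragraphs are either already internal to Proposition~\ref{P:1} and Theorem~\ref{T:2} or unnecessary --- the dyadic argument inside Theorem~\ref{T:2} directly yields $\int_{\Om_{\rho_0}}|\psi|^2=0$ for every small $\rho_0$, so $\psi\equiv0$ on all of $\Om$ follows by letting $\rho_0\to0$, with no unique continuation invoked. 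In short: right picture, but you are rebuilding Theorem~\ref{T:2} from the inside rather than using it as a black box, and you have placed the Hardy inequality on the wrong side of it.
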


Theorem~\ref{T:1} follows from the fundamental criterion for self-adjointness (see, for example,
 \cite{ReeSim},\cite{AkhGla}), a multidimensional Hardy inequality \cite{Dav}, and
a (refined) Agmon-type exponential estimate (see Theorem~\ref{T:2} in Section~\ref{S:3}).

We now turn to various examples of functions $G$ satisfying condition ($\Sigma$),
and the associated criteria for essential self-adjointness of $H$ in terms of the growth of the
potential at the boundary of the domain.

The first, simplest example of a function $G$ satisfying condition ($\Sigma$) is the one for which at sufficiently small $t$:
\begin{equation*}\label{Ex:11}
G(t)=\ln t\,,
\end{equation*}
which leads to the classical bound
\begin{equation}\label{Ex:12}
V_1(x)\geq \frac34\cdot\frac{1}{d(x)^2}\,,\qquad\text{as}\quad x\to\partial\Om\,.
\end{equation}

The second example is (again for  $t$ sufficiently small)
\begin{equation*}\label{Ex:21}
G(t)=\ln t-c\cdot t\,,\qquad c\in\IR^+\,.
\end{equation*}
This choice of $G$ leads, through \eqref{E:3}, to
\begin{equation}\label{Ex:22}
V_1(x)\geq \frac34\cdot\frac{1}{d(x)^2} - \frac{\tilde c}{d(x)}\,,\qquad\text{as}\quad x\to\partial\Om\,,
\end{equation}
for all $\tilde c <2c$. This is the lower bound obtained by Brusentsev in \cite[Theorem 6.2]{Bru}
for the case at hand.

The next example is (again for sufficiently small $t$) of the
form
\begin{equation*}\label{Ex:23}
G(t)=\ln t +\int_t f(u)\,du
\end{equation*}
with
\begin{equation}\label{Ex:24}
f(u)\geq 0,\quad \lim_{u\to0}uf(u)=0,\quad\text{and}\quad
\lim_{t\to0}\int_t f(u)\,du<\infty\,.
\end{equation}
This leads to a bound on $V$ of the form
\begin{equation}\label{Ex:25}
V(x)\geq \frac34\cdot\frac{1}{d(x)^2} -\frac{\tilde c}{d(x)}\cdot
f(d(x))\,, \qquad\text{as}\quad x\to\partial\Om\,,
\end{equation}
with $f$ as above and all $\tilde c <2$. Although this result does
not appear in an explicit form in \cite{Bru} it can still be
obtained from Corollary 3.3 in \cite{Bru}. Note that, since we
required that $uf(u)\to0$ as $u\to0$, the second term
$\frac{\tilde c}{d(x)}\cdot f(d(x))$ in \eqref{Ex:25} is of lower
order than $\frac{1}{d(x)^2}$, and thus does not contradict the
optimality of $\frac34\cdot\frac{1}{d(x)^2}$.

The last example is our main hierarchy of essential
self-adjointness conditions. Let $p\in\ZZ$, $p\geq 2$, and
iteratively define
\begin{equation}\label{E:DefnLp}
L_1(t)=\ln (1/t),\quad
L_p(t)=\ln L_{p-1}(t)\,,
\end{equation}
where each $L_p$ is defined for $t\in(0,e_p^{-1})$ with $e_1=e$
and $e_p=e^{e_{p-1}}$.  Then we have the following result:
\begin{theorem}\label{T:p}
Consider an open, bounded domain $\Omega\subset\IR^n$ with $C^2$-smooth boundary, and the Schr\"{o}dinger
operator
\begin{equation}
H=-\Delta+V,
\end{equation}
with $V\in L^\infty_\emph{loc}$ and domain $\mathcal{D}(H)=C_0^\infty(\Omega)$.
Let $p\in\ZZ$, $p\geq2$ and assume that
\begin{equation}\label{E:2p}
V=V_1+V_2,
\end{equation}
where
\begin{equation}\label{E:3p}
V_1(x)\geq \frac34\cdot\frac{1}{d(x)^2}-\frac{1}{d(x)^2}\sum_{j=2}^p \biggl(\prod_{k=1}^{j-1}
 L_k(d(x))\biggr)^{-1} -\frac{1}{d(x)}\cdot f(d(x))\,,
\end{equation}
for all $x$ with $d(x)<\min(e_p^{-1},d_\Om)$, with $f$ satisfying \eqref{Ex:24}, $V_1\geq 0$ on $\Om$,
and $V_2\in L^\infty(\Om)$.

Then $H$ is essentially self-adjoint in $L^2(\Om)$.
\end{theorem}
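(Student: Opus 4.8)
The plan is to deduce Theorem~\ref{T:p} from Theorem~\ref{T:1}: for each integer $p\geq2$ I would exhibit a function $G$ satisfying condition $(\Sigma)$ for which the hypotheses \eqref{E:2p}--\eqref{E:3p} force \eqref{E:2}--\eqref{E:3}, with the same decomposition $V=V_1+V_2$. Set $S(t)=\sum_{j=2}^p\bigl(\prod_{k=1}^{j-1}L_k(t)\bigr)^{-1}$ and $a(t)=S(t)+tf(t)$; by \eqref{Ex:24} these are nonnegative for $t$ near $0$ and tend to $0$ as $t\to0^+$. Fix $0<\delta<d_0\leq\min(e_p^{-1},d_\Om)$ with $a<1$ on $(0,d_0)$, take a smooth cutoff $\chi$ with $0\leq\chi\leq1$, $\chi\equiv1$ on $(0,\delta]$ and $\chi\equiv0$ on $[d_0,\infty)$, and set $G'(t)=\frac{\chi(t)}{t}\sqrt{1-a(t)}$ for $0<t<d_0$ and $G'(t)=0$ for $t\geq d_0$, with $G$ any $C^1$ antiderivative. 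Then $(\Sigma.1)$ is immediate since $0\leq\chi\leq1$ and $0\leq1-a(t)\leq1$. For \eqref{E:3}: if $d(x)\geq d_0$ then $G'(d(x))=0\leq V_1(x)+\frac14 d(x)^{-2}$ because $V_1\geq0$ on $\Om$; if $d(x)<d_0\leq\min(e_p^{-1},d_\Om)$, then \eqref{E:3p} yields $V_1(x)+\frac14 d(x)^{-2}\geq d(x)^{-2}\bigl(1-a(d(x))\bigr)\geq G'(d(x))^2$.

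The substance of the argument is $(\Sigma.2)$. First I would record the telescoping identity $\frac{d}{dt}L_j(t)=-\bigl(t\prod_{k=1}^{j-1}L_k(t)\bigr)^{-1}$, an immediate induction from $L_1(t)=-\ln t$ and $L_j(t)=\ln L_{j-1}(t)$; this gives $\int_t^\delta\bigl(s\prod_{k=1}^{j-1}L_k(s)\bigr)^{-1}ds=L_j(t)-L_j(\delta)$ and also the convergence of $\int_0^\delta s^{-1}a(s)^2\,ds$. Next, for $t\in(0,\delta]$ one has $\frac1t-G'(t)=\frac1t\bigl(1-\sqrt{1-a(t)}\bigr)=\frac{a(t)}{2t}\bigl(1+O(a(t))\bigr)$, so, integrating and using $\int_0^\delta f<\infty$ from \eqref{Ex:24}, $G(t)=\ln t+\frac12\int_t^\delta s^{-1}a(s)\,ds+O(1)=-L_1(t)+\frac12\sum_{j=2}^p L_j(t)+O(1)$ as $t\to0^+$. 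Since $e^{L_1(t)}=1/t$ and $e^{L_j(t)}=L_{j-1}(t)$ for $j\geq2$, this gives $e^{-2G(t)}\asymp t^{-2}\bigl(\prod_{k=1}^{p-1}L_k(t)\bigr)^{-1}$; evaluating at $t=2^{-n}\rho_0$ the prefactor $4^{-n}$ cancels $t^{-2}=4^n\rho_0^{-2}$, and with $L_1(2^{-n}\rho_0)\asymp n$, $L_2(2^{-n}\rho_0)\asymp\ln n$, and so on, the series in \eqref{E:sigma3} is comparable to $\rho_0^{-2}\sum_{n}\bigl(\prod_{k=1}^{p-1}L_k(2^{-n}\rho_0)\bigr)^{-1}$, which diverges for every $p$, being up to a constant factor a classical iterated-logarithm series. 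Thus $G$ satisfies condition $(\Sigma)$ and Theorem~\ref{T:1} finishes the proof.

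I expect the only delicate step — and the one carrying the optimality of the constants $1$ in \eqref{E:3p} — to be the need to \emph{saturate} the admissible size of $G'$ by taking $\sqrt{1-a(t)}$ rather than the cruder $1-a(t)$: the factor $\frac12$ coming from $1-\sqrt{1-a}=\frac a2+O(a^2)$ is exactly what turns $\sum_j L_j$ into $\frac12\sum_j L_j$ in the exponent, so that $e^{-2G(t)}\asymp t^{-2}\bigl(\prod_k L_k(t)\bigr)^{-1}$ and \eqref{E:sigma3} becomes the borderline-divergent iterated-logarithm series; a constant larger than $1$ in front of some term $d(x)^{-2}\bigl(\prod_{k=1}^{j-1}L_k(d(x))\bigr)^{-1}$ would push this exponent past the threshold and make the series converge, which is the obstruction underlying Theorem~\ref{T:pOptimal}. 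The remaining points — the $C^1$-matching of $G'$ across $d_0$ (the role of $\chi$), the finiteness of the error integrals (again via the derivative identity for the $L_j$), and the choice of $d_0$ — are routine.
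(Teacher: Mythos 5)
Your argument is correct, and it takes a genuinely different route from the paper's, centered on a different choice of the function $G$. The paper writes $G_p$ explicitly in integrated form, $G_p(t)=\ln h(t)+\tfrac12\sum_{j=2}^p L_j(h(t))+\int_{h(t)}\tilde f$, where $\tilde f=f+\tfrac{1}{4t}\mathcal L_p^2$ is a correction term; the price is that verifying \eqref{E:3}, i.e. $V_1+\tfrac14 d^{-2}\geq G_p'(d)^2$, requires expanding $G_p'(t)^2=t^{-2}\bigl(1-\tfrac12\mathcal L_p(t)-t\tilde f(t)\bigr)^2$ and checking a nontrivial algebraic inequality using the smallness of $\mathcal L_p$ and $t\tilde f$ (this is precisely what forces the introduction of $\tilde f$ and the auxiliary bound $1-\tfrac12\mathcal L_p-t\tilde f\geq\tfrac23$). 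You instead define $G'$ directly by $G'(t)=\tfrac{\chi(t)}{t}\sqrt{1-a(t)}$ with $a=S+tf$, so that $G'(t)^2\leq t^{-2}(1-a(t))$ makes \eqref{E:3} follow from \eqref{E:3p} with no algebra at all; the work is shifted to $(\Sigma.2)$, where the Taylor expansion $1-\sqrt{1-a}=\tfrac a2+O(a^2)$, together with the derivative identity for $L_j$ and the integrability of $a^2/t$, recovers $G(t)=\ln t+\tfrac12\sum_{j=2}^pL_j(t)+O(1)$ and hence $e^{-2G(t)}\asymp t^{-2}\bigl(\prod_{k=1}^{p-1}L_k(t)\bigr)^{-1}$. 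Both approaches are valid and of comparable length; yours "saturates" the Agmon condition by construction and makes the origin of the factor $\tfrac12$ (and hence of the optimal constant $1$ in \eqref{E:3p}) transparent, while the paper's avoids asymptotics by having a closed-form $G$, at the cost of the $\tilde f$ bookkeeping. One cosmetic remark: you should take $d_0<e_p^{-1}$ strictly so that $a$, hence $\sqrt{1-a}$, is defined on all of $(0,d_0]$; the cutoff $\chi$ already makes $G'$ vanish there, but it is cleaner not to evaluate $a$ at the endpoint. The paper also avoids the need for the normalization remark (replacing $V_1\geq0$ by $V_1\geq K$) in your version, since your $G'$ is identically zero past $d_0$, whereas the paper's $G_p'$ is only bounded there.
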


\begin{remark}  Let K be a positive constant. Rewriting $V(x)$ as
$$
V(x) = (V_1(x) +K) + (V_2(x)-K)
$$
one sees that it is sufficient to prove Theorem~\ref{T:p} with the
condition $V_1(x) \geq 0$ on  $\Om$ replaced by the condition
$V_1(x) \geq K$ on $\Om$ with K an arbitrary positive constant.
\end{remark}

Note that, for any given $j\geq 2$, each term $\frac1t \cdot
\biggl(\prod_{k=1}^{j-1} L_k(t)\biggr)^{-1}$ is non-integrable,
and hence a higher order correction than the integrable term
$f(t)$. Further note that the domain on which  $\sum_{j\leq p}
\biggl(\prod_{k=1}^{j-1} L_k(t)\biggr)^{-1}$ is well defined
shrinks to the empty set as $p\to\infty$.

The term $\frac14 \cdot\frac{1}{d(x)^2}$ in \eqref{E:3} comes from the 
additional ``barrier'' given by the uncertainty principle of quantum mechanics 
via the Hardy inequality (see \eqref{E:S4eq1} below).  The fact that Hardy 
inequalities appear here is not surprising since, as expressions of the 
uncertainty principle, they play a key role in various aspects of  the 
spectral analysis of 
Schr\"odinger and Dirac operators like stability, self-adjointness, etc 
(see e.g. \cite{G}, \cite{EL}, \cite{FLS}, \cite{HO2LT}  and the references therein). During the 
last decade a 
large body of literature about 
improvements to Hardy inequalities has appeared (see e.g. the references in \cite{DELV},\cite{FLS},\cite{BFT}, \cite{TZ}). 
In particular, in \cite{BFT} (under suitable conditions) the following optimal 
improvement of \eqref{E:S4eq1} was proved:
\begin{equation}\label{Hop}
\int_\Om\!\! |\nabla\varphi(x)|^2\,dx \geq
\frac14\int_\Om\!\!\frac{|\varphi(x)|^2}{d(x)^2}
\Bigl(1+\sum_{i=1}^{\infty} \prod_{k=1}^{i}X_{k}^{2}\biggl(\frac{d(x)}{D}\biggr) \Bigr)\,dx
\end{equation}
where $D$ is a sufficiently large constant, and $X_k(t)$, $t>0$
are defined recursively by
$$
X_1(t)=(1-\ln t)^{-1}, \quad X_k(t)=X_1(X_{k-1}(t)).
$$
However, this improvement of Hardy's inequality does not lead to an improvement of the result in 
Theorem ~\ref{T:p} (which according to Theorem \ref{T:pOptimal} is already 
optimal at the level of logarithmic subleading terms). Indeed, at the level of the
leading term, as $t\rightarrow 0$, $X_{k}^{2}=L_{k}^{-2}$ and so the 
contribution of the logarithmic terms in \eqref{Hop} can be absorbed in the 
last (integrable) term on the rigt-hand side of \eqref{E:3p}.

As we will show in Section~\ref{S:4}, the theorem follows from
Theorem~\ref{T:1} with  the following choice, for sufficiently
small $t$, of $G$ function:
 \begin{equation}\label{Ex:31}
 G_p(t)=\ln t +\frac12\cdot \sum_{j=2}^p L_j(t) +\int_t \tilde f(u)\,du\,,
 \end{equation}
 where $\tilde f$ also satisfies \eqref{Ex:24}.

Our last result is about the optimality of \eqref{E:3p}. 
With the hypotheses of Theorem~\ref{T:p}, it is well-know that the
constant $\frac34$ in front of the first term on the right-hand
side of \eqref{E:3p} is optimal. We claim that the constant 1 in
front of each logarithmic term in the sum above is also optimal, in the following precise
sense:

\begin{theorem}\label{T:pOptimal}
 Given $p\geq2$ and a constant $c>1$, there exist potentials
$V$ for which $H=-\Delta+V$ is \emph{not} essentially
self-adjoint, and which grow close to the boundary $\partial\Om$
as
\begin{equation}\label{E:4p}
\begin{aligned}
V(x)\geq& \frac34\cdot\frac{1}{d(x)^2}-\frac{1}{d(x)^2}\sum_{j=2}^{p-1}
\biggl(\prod_{k=1}^{j-1} L_k(d(x))\biggr)^{-1}\\
                      &-c\cdot\frac{1}{d(x)^2}\cdot\biggl(\prod_{k=1}^{p-1} L_k(d(x))\biggr)^{-1}\,.\\
\end{aligned}
\end{equation}
\end{theorem}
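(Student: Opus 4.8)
The plan is to exhibit, in a collar of $\partial\Om$, an explicit potential $V$ satisfying \eqref{E:4p} which lies just \emph{below} the threshold of Theorem~\ref{T:p}, together with two functions $\psi,\eta\in\mathcal D(H^\ast)$ for which the boundary form $\langle H^\ast\eta,\psi\rangle-\langle\eta,H^\ast\psi\rangle$ does not vanish; this makes $H^\ast$ non-symmetric, hence $H$ not essentially self-adjoint. I would first fix $c'$ with $1<c'<c$ and $\delta\in(0,\min(e_{p-1}^{-1},d_\Om))$, write $L_k=L_k(t)$, $A(t)=\sum_{j=2}^{p-1}\Big(\prod_{k=1}^{j-1}L_k\Big)^{-1}$, $B(t)=\Big(\prod_{k=1}^{p-1}L_k\Big)^{-1}$, and set, on $(0,\delta)$,
$$u_\sharp(t)=t^{-1/2}\Big(\prod_{k=1}^{p-2}L_k(t)\Big)^{-1/2}L_{p-1}(t)^{-c'/2}>0,\qquad W(t):=\frac{u_\sharp''(t)}{u_\sharp(t)},$$
so $-u_\sharp''+Wu_\sharp=0$. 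A direct differentiation (using $L_k'(t)=-(tL_1\cdots L_{k-1})^{-1}$) gives $u_\sharp'/u_\sharp=\tfrac1{2t}\bigl(-1+A(t)+c'B(t)\bigr)$ and
$$W(t)=\frac34\cdot\frac1{t^2}-\frac1{t^2}A(t)-c'\cdot\frac1{t^2}B(t)+E(t),\qquad E(t)=o\bigl(t^{-2}B(t)\bigr)\quad(t\to0^+).$$
Since $c'>1$, the substitution $s=\ln(1/t)$ turns $\int_0^\delta|u_\sharp|^2\,dt$ into the convergent integral $\int^{\infty}\frac{ds}{s\,\ln s\cdots(\underbrace{\ln\cdots\ln}_{p-2}s)^{c'}}$, so $u_\sharp\in L^2(0,\delta)$; writing $u_\sharp(t)=t^{1/2}e^{-G(t)}$ with $G(t)=\ln t+\tfrac12\sum_{j=2}^{p-1}L_j(t)+\tfrac{c'}2L_p(t)$, this is precisely the failure of $(\Sigma.2)$ for $G$, so the construction sits exactly at the boundary of the hypothesis of Theorem~\ref{T:1}. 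Finally, let $u_\flat(t)=u_\sharp(t)\int_0^t u_\sharp(r)^{-2}\,dr$ be the second independent solution; since $u_\sharp(r)^{-2}\sim r\cdot(\text{slowly varying})$ near $0$, one gets $u_\flat(t)\sim t^{3/2}\cdot(\text{slowly varying})\in L^2(0,\delta)$, and the Wronskian satisfies $u_\sharp u_\flat'-u_\sharp'u_\flat\equiv1$.

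Next I would build the potential. On $\Om_\delta=\{x\in\Om:d(x)<\delta\}$ put
$$V(x):=W(d(x))+(\Delta d)(x)\cdot\frac{u_\sharp'(d(x))}{u_\sharp(d(x))},$$
and $V:=0$ on $\Om\setminus\Om_\delta$. Then $V\in L^\infty_{\mathrm{loc}}(\Om)$, because $u_\sharp>0$ and $W,\,u_\sharp'/u_\sharp$ are continuous on $(0,\delta)$ while $\Delta d$ is bounded on $\{d<d_\Om\}$. Using $|u_\sharp'/u_\sharp|\le t^{-1}$ near $0$, the expansion of $W$, and the elementary fact $B(t)/t\to\infty$, one checks that for $x$ near $\partial\Om$
$$V(x)\ \ge\ W(d(x))-\frac{C}{d(x)}\ \ge\ \frac34\cdot\frac1{d(x)^2}-\frac1{d(x)^2}\sum_{j=2}^{p-1}\Big(\prod_{k=1}^{j-1}L_k(d(x))\Big)^{-1}-c\cdot\frac1{d(x)^2}\Big(\prod_{k=1}^{p-1}L_k(d(x))\Big)^{-1},$$
i.e.\ \eqref{E:4p} holds; the strict gap $c-c'>0$ is exactly what absorbs the bounded term $(\Delta d)u_\sharp'/u_\sharp$ together with the error $E$.

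It then remains to show $H$ is not essentially self-adjoint. Fix $\chi\in C^\infty(\IR)$ with $\chi\equiv1$ on $(-\infty,\delta/2]$ and $\chi\equiv0$ on $[\delta,\infty)$, and set $\psi(x)=\chi(d(x))\,u_\sharp(d(x))$, $\eta(x)=\chi(d(x))\,u_\flat(d(x))$; by the $L^2$-bounds above and the co-area formula ($|\nabla d|=1$ for $d<d_\Om$), $\psi,\eta\in L^2(\Om)$. For smooth $F$ one has $-\Delta\bigl(F(d(x))\bigr)=-F''(d(x))-(\Delta d)(x)\,F'(d(x))$ on $\{d<d_\Om\}$; combined with $-u_\sharp''+Wu_\sharp=0$ and the choice of the gauge term in $V$, this gives $(-\Delta+V)\psi=0$ on $\Om_{\delta/2}$, while on $\{\delta/2\le d<\delta\}$ both $(-\Delta+V)\psi$ and $(-\Delta+V)\eta$ are bounded and supported in a compact subset of $\Om$. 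Hence $\psi,\eta\in\mathcal D(H^\ast)=\{f\in L^2:(-\Delta+V)f\in L^2\}$ (with $-\Delta+V$ in the distribution sense). Applying Green's identity on $\{d>\varepsilon\}$ and letting $\varepsilon\to0^+$,
$$\langle H^\ast\eta,\psi\rangle-\langle\eta,H^\ast\psi\rangle=\lim_{\varepsilon\to0^+}\int_{\{d=\varepsilon\}}\bigl(u_\sharp u_\flat'-u_\sharp'u_\flat\bigr)(\varepsilon)\,dS=\lim_{\varepsilon\to0^+}\mathcal H^{n-1}\bigl(\{d=\varepsilon\}\bigr)=\mathcal H^{n-1}(\partial\Om)>0,$$
so $H^\ast$ is not symmetric and $H$ is not essentially self-adjoint. (Multiplying $u_\sharp,u_\flat$ by non-constant functions of the nearest boundary point would yield infinitely many independent deficiency elements, but one suffices.)

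The hard part is entirely bookkeeping: verifying that the error $E(t)$ in the expansion of $W$, together with the bounded gauge term, really is $o\bigl(t^{-2}B(t)\bigr)$ so that \eqref{E:4p} holds with the \emph{prescribed} constant $c$; and, on the other side, the sharp threshold $c'>1$ (versus $c'\le1$, which by Theorem~\ref{T:p} gives essential self-adjointness) in the convergence of the Bertrand-type integral computing $\|u_\sharp\|_{L^2(0,\delta)}$. No genuinely $n$-dimensional analysis enters, since $d$ itself obeys the transport identity $-\Delta(F\circ d)=-F''(d)-(\Delta d)\,F'(d)$, which is precisely what lets the one-dimensional construction be transplanted to $\Om$.
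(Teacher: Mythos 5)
Your proof is correct, and its core mechanism coincides with the paper's: construct a one-dimensional Bertrand-type solution whose $L^2$-norm near zero reduces to a Bertrand integral $\int^\infty\frac{ds}{s\ln s\cdots(\ln_{p-2}s)^{c'}}$, converging precisely when $c'>1$, so that the associated potential $W=u_\sharp''/u_\sharp$ sits at the limit-circle threshold and has exactly the asymptotics \eqref{E:4p}. You parametrize the critical solution slightly differently than the paper does: you take $u_\sharp=t^{-1/2}\bigl(\prod_{k=1}^{p-2}L_k\bigr)^{-1/2}L_{p-1}^{-c'/2}$ with the tunable exponent at level $p-1$, whereas the paper uses $\psi_{p,\alpha}=x^{-1/2}\bigl(\prod_{j=1}^{p-1}L_j\bigr)^{-1/2}L_p^\alpha$ with an exponent $\alpha<-1/2$ at level $p$. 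The net effect is the same, but your choice makes the $L^2$ condition $c'>1$ and the matching to the constant $c$ in \eqref{E:4p} more transparent (there is nothing to absorb from a further log level).

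The genuinely different part, and a real improvement, is that you actually carry out the lift to $n$ dimensions. The paper works entirely on the interval $(0,1)$, classifies the endpoint as limit-circle via Weyl's alternative, and then closes with the one-line remark that ``the potentials $V_{p,\alpha}$ can also be used in several space dimensions to construct counterexamples,'' without proof. You instead define $V(x)=W(d(x))+(\Delta d)(x)\,u_\sharp'(d(x))/u_\sharp(d(x))$ on a boundary collar, use the transport identity $-\Delta\bigl(F\circ d\bigr)=-F''(d)-(\Delta d)F'(d)$ on the $C^2$ collar (where $|\nabla d|=1$), note that the extra gauge term is $O(1/t)$ and hence of lower order than $(c-c')B(t)/t^2$ because $t=o(B(t))$, and then exhibit explicit cutoff deficiency elements $\psi=\chi(d)\,u_\sharp(d)$, $\eta=\chi(d)\,u_\flat(d)$ whose boundary form, computed via Green's identity and the coarea formula, equals the Wronskian ($\equiv1$) times $\mathcal H^{n-1}(\partial\Om)>0$. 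That establishes non-symmetry of $H^\ast$ and hence non-essential-self-adjointness directly on $\Om\subset\IR^n$, completing the step the paper leaves as a remark.

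Two small expository points. First, your $\Om_\delta$ means $\{d<\delta\}$, opposite to the paper's convention $\Om_\rho=\{d>\rho\}$; this is internally consistent but worth flagging to avoid confusion. Second, setting $V=0$ on $\{d\ge\delta\}$ creates a jump across $\{d=\delta\}$; this is harmless for $V\in L^\infty_{\mathrm{loc}}$, but it would be slightly cleaner to interpolate smoothly. Neither affects the validity of the argument.
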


We end this section with a discussion of condition ($\Sigma$)
and its relation with condition (3.12) from Corollary 3.2 in
\cite{Bru}. We comment first on condition ($\Sigma .1$). Note that
 ($\Sigma.2$) implies that $G(t) \rightarrow -\infty$ as $t
 \rightarrow 0$. So $G'(t) \geq 0$ in ($\Sigma .1$) only adds that
$G(t) \rightarrow -\infty$ monotonically which is not a real
restriction as far as we are not considering (as already stated in
the Introduction) the effect of oscillations of the potential. In
fact, if one considers potentials which grow monotonically as $x
\rightarrow \partial \Omega $ one may impose even a stronger
condition that $G'(t)$ is monotonically increasing to $\infty$ as
$t \rightarrow 0$. Consider now $G'(t) \leq \frac{1}{t} $ in
($\Sigma .1$). This is again harmless (as far as it does not
contradict  ($\Sigma .2$)!) since if $G'_1(t)\geq G'_2(t)$ then
Theorem \ref{T:1} with $G(t)=G_2(t)$ gives a stronger  result than
with $G(t)=G_1(t)$.

The crucial condition is $(\Sigma .2)$ and this is to be compared
with Brusentsev's condition (3.12) from Corollary 3.3. We show now
that Brusentsev's condition (3.12) is (at least
for $G(t)$ satisfying $(\Sigma .1)$) strictly stronger than
$(\Sigma .2)$. Notice that we have restricted our attention to the
situation when his matrix $A\equiv I$. Comparing functions, we see
that in Brusentsev's notation the function which determines the
growth of the potential at the boundary is $\eta(x)$, and that we
are therefore interested in showing that, if
\begin{equation}\label{E:8sigma}
\eta(x)=-G(d(x)),
\end{equation}
satisfies condition (3.12) in \cite{Bru}, then $G$ must satisfy our
condition ($\Sigma.2$). Condition (3.12) in Brusentsev
guarantees that there exists a constant $C>0$ such that
\begin{equation}\label{brus}
|\nabla\eta(x)|\cdot e^{-\eta(x)}\leq C\,.
\end{equation}
If we recall that for $x$ with $d(x)$ small enough, $|\nabla
d(x)|=1$, then we get from \eqref{E:8sigma} and \eqref{brus} that
\begin{equation*}
\frac{d}{dt}\,e^{G(t)}=G'(t) e^{G(t)}\leq C,
\end{equation*}
for all $0<t< d_\Om$. But since $G(t)\to-\infty$ as $t\to0+$, we
can integrate, for all $n$ greater than some fixed integer
$N_\Om$,
\begin{equation*}
e^{G(2^{-n}\rho_0)}=\int_0^{2^{-n}\rho_0}  G'(t) e^{G(t)}\,dt\leq
2^{-n}\cdot C\rho_0\,.
\end{equation*}
Plugging this into the series from \eqref{E:sigma3} we get
\begin{equation*}
\sum_{n=1}^\infty 4^{-n}e^{-2G(2^{-n}\rho_0)}\geq \sum_{n \geq
N_{\Omega}}^\infty 4^{-n}\cdot 4^{n} (C\rho_0)^{-2}=+\infty\,,
\end{equation*}
thus showing that $G$ satisfies ($\Sigma.2$).

Conversely, recall the $G_p$ defined in \eqref{Ex:31}. As we will
show in Section~\ref{S:4}, the function $G_p$ satisfies
($\Sigma$). Take now the simplest case $G(t)=G_2(t)$ with $\tilde
f \equiv 0$ i.e. $G(t)=\ln t + \frac{1}{2} \ln \ln \frac{1}{t}$ for
sufficiently small $t$ and set
\begin{equation*}
\eta(x)=-G \bigl(d(x)\bigr)\,.
\end{equation*}
Then as $t=d(x) \rightarrow 0+$
\begin{equation*}
|\nabla\eta (x)|e^{-\eta(x)}=G'(t)\cdot e^{G(t)} =\biggl(\ln
\frac{1}{t}\biggr)^{\frac12}\biggl(1- \frac12\frac{1}{\ln \frac{1}{t}}\biggr)
\to+\infty\,,
\end{equation*}
and hence $\eta $ does not satisfy condition (3.12) from
\cite{Bru}.

%%%%%%%%%%%%%%%%%%%%%%%%%%%%%%%%%%%%%%%%%%%%%%%%%%%%%%%%%%%%%%%%%%%%%%%%%%%%%%%%%%%%%%%
\section{Agmon-type estimates}\label{S:3}
%%%%%%%%%%%%%%%%%%%%%%%%%%%%%%%%%%%%%%%%%%%%%%%%%%%%%%%%%%%%%%%%%%%%%%%%%%%%%%%%%%%%%%%%%%%%%%%

\begin{prop}\label{P:1}
Let  $\psi$ be a weak solution of
$$
H\psi=E\psi,
$$
i.e.$\psi\in H^1_\emph{loc}(\Om)$ and satisfies
\begin{equation}\label{E:4}
\langle\psi,(H-E)\varphi\rangle=0,\qquad\text{for every}\quad\varphi\in C_0^\infty(\Omega)
\end{equation}
Let $g\in C^1(\Omega)$ be a real-valued function for which there exists a constant $c>0$ such that
\begin{equation}\label{E:14}
\langle\varphi,(H-E)\varphi\rangle-\int_\Omega\!|\varphi(x)|^2|\nabla g(x)|^2\,dx\geq c\|\varphi\|^2
\end{equation}
for every $\varphi\in C_0^\infty(\Om)$.

For $\rho>0$, small enough, set
$\Om_\rho=\{x\in\Om\,|\,d(x)>\rho\}$. Then there exists a constant
$K=K(c)<\infty$, independent of $\rho$, such that
\begin{equation}\label{E:5}
\int_{\Om_{2\rho}}\!\! |e^{g(x)}\psi(x)|^2\,dx\leq
\frac{K(c)}{\rho}\int_{\Om_\rho\setminus\Om_{2\rho}}\!\!\biggl(\frac{1}{\rho}+|\nabla
g(x)|\biggr) |e^{g(x)}\psi(x)|^2\,dx\,.
\end{equation}
\end{prop}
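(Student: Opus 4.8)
The plan is to prove the weighted $L^2$ estimate \eqref{E:5} by a standard Agmon-type commutator/IMS-localization argument, combining the eigenvalue equation \eqref{E:4} with the coercivity hypothesis \eqref{E:14}. First I would introduce a smooth cutoff $\chi_\rho\in C_0^\infty(\Om)$ adapted to the sets $\Om_\rho\supset\Om_{2\rho}$: take $\chi_\rho\equiv1$ on $\Om_{2\rho}$, $\chi_\rho\equiv0$ outside $\Om_\rho$ (more precisely, on a slightly larger set so as to stay compactly inside $\Om$), with $|\nabla\chi_\rho|\leq C/\rho$, which is possible because $d$ is Lipschitz with $|\nabla d|\leq1$. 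Since $\psi$ is only $H^1_{\mathrm{loc}}$, I would also need a secondary cutoff (or a density/approximation argument) to justify using $\varphi=\chi_\rho^2 e^{2g}\psi$, suitably truncated, as a test function in \eqref{E:4}; the weight $e^{g}$ is locally bounded on $\Om$ so these manipulations are legitimate after cutting away from $\partial\Om$. The test function $\varphi_\rho:=\chi_\rho e^{g}\psi$ then belongs to $C_0^\infty$-closure in the relevant norm, so that \eqref{E:14} applies to it.

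The core computation is the "ground-state substitution" identity: expanding $\langle\varphi_\rho,(H-E)\varphi_\rho\rangle$ with $\varphi_\rho=\chi_\rho e^{g}\psi$ and using $(H-E)\psi=0$ weakly, one obtains the Agmon-type identity
\begin{equation*}
\langle\varphi_\rho,(H-E)\varphi_\rho\rangle - \int_\Om |\varphi_\rho|^2|\nabla g|^2\,dx
= \int_\Om |\nabla\chi_\rho|^2\, e^{2g}|\psi|^2\,dx .
\end{equation*}
This is exactly the kind of identity recorded in Lemma~\ref{L:2} of the paper (the cross terms involving $\nabla\chi_\rho\cdot\nabla g$ and the first-order part cancel against the contribution of the eigenvalue equation, leaving only the "kinetic energy of the cutoff"). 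Feeding this into the lower bound \eqref{E:14} gives
\begin{equation*}
c\,\|\varphi_\rho\|^2 = c\int_\Om \chi_\rho^2\, e^{2g}|\psi|^2\,dx \leq \int_\Om |\nabla\chi_\rho|^2\, e^{2g}|\psi|^2\,dx .
\end{equation*}
Since $\chi_\rho\equiv1$ on $\Om_{2\rho}$ and $\nabla\chi_\rho$ is supported in $\Om_\rho\setminus\Om_{2\rho}$ with $|\nabla\chi_\rho|\leq C/\rho$, the left side dominates $c\int_{\Om_{2\rho}}e^{2g}|\psi|^2$ and the right side is at most $(C^2/\rho^2)\int_{\Om_\rho\setminus\Om_{2\rho}}e^{2g}|\psi|^2$. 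This already yields \eqref{E:5} with $K(c)=C^2/c$ (in fact a bound without the $|\nabla g|$ term); the stated form with the extra $|\nabla g(x)|$ summand is simply a weaker, more flexible version that one keeps because in the application $g$ will be chosen so that $|\nabla g|$ is comparable to $1/\rho$ on the annulus, and it is convenient to have both terms available when iterating the estimate.

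The main obstacle I anticipate is not the algebra of the identity — that is the classical Agmon computation — but the \emph{regularity bookkeeping}: justifying the integration by parts and the use of $\chi_\rho^2 e^{2g}\psi$ as an admissible test vector given only $\psi\in H^1_{\mathrm{loc}}(\Om)$ and $V\in L^\infty_{\mathrm{loc}}$, and in particular controlling that all integrals over $\Om_\rho\setminus\Om_{2\rho}$ (a region with compact closure in $\Om$, where everything is bounded) are finite and that the cutoff can be taken $C^1$ with the claimed gradient bound $C/\rho$ uniformly in $\rho$. The uniformity of $C$ (hence of $K$) in $\rho$ rests on the fact, noted in \eqref{E:211}, that near $\partial\Om$ the distance function has $|\nabla d|\leq1$, so $\chi_\rho=\theta(d(x)/\rho)$ for a fixed profile $\theta$ does the job. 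Once these points are handled, the estimate \eqref{E:5} follows immediately from the displayed chain of inequalities.
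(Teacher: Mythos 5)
The overall strategy is the same as the paper's (Agmon-type commutator identity with the test function $f\psi$, $f=\chi_\rho e^{g}$, then apply the coercivity), but your central displayed identity is wrong: the cross term does \emph{not} cancel. Lemma~\ref{L:2} gives $(h-E)[f\psi,f\psi]=\langle\psi,|\nabla f|^2\psi\rangle$, and with $f=\chi_\rho e^{g}$ one computes
\begin{equation*}
|\nabla f|^2 = \chi_\rho^2 e^{2g}|\nabla g|^2 + 2\chi_\rho e^{2g}\,\nabla\chi_\rho\!\cdot\!\nabla g + e^{2g}|\nabla\chi_\rho|^2,
\end{equation*}
so that after subtracting $\int|\varphi_\rho|^2|\nabla g|^2$ one is left with \emph{both} the term $\int e^{2g}|\nabla\chi_\rho|^2|\psi|^2$ and the cross term $2\int\chi_\rho e^{2g}(\nabla\chi_\rho\!\cdot\!\nabla g)|\psi|^2$. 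The eigenvalue equation has already been fully used in deriving Lemma~\ref{L:2}; there is nothing left to cancel this cross term against. Moreover, in the intended application $g=G(d(x))$ with $G'\ge0$ and $\chi_\rho$ increasing in $d$, so $\nabla\chi_\rho\!\cdot\!\nabla g\ge0$ on the annulus, i.e.\ the cross term goes \emph{against} you in the upper bound $c\|\varphi_\rho\|^2\le\langle\psi,|\nabla f|^2-f^2|\nabla g|^2\,\psi\rangle$ and cannot be discarded.

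This is precisely why the statement \eqref{E:5} contains the $|\nabla g|$ factor in the integrand: it is not a ``weaker, more flexible version'' of a cleaner bound, it is the correct estimate. The cross term must be bounded via $|2\chi_\rho e^{2g}\nabla\chi_\rho\!\cdot\!\nabla g|\le 2e^{2g}|\nabla\chi_\rho||\nabla g|\le (2K_1/\rho)\,e^{2g}|\nabla g|$ on $\Om_\rho\setminus\Om_{2\rho}$, which together with $|\nabla\chi_\rho|^2\le K_1^2/\rho^2$ produces exactly the mixed factor $\bigl(\tfrac{1}{\rho}+|\nabla g|\bigr)$ in \eqref{E:5}. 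Your regularity bookkeeping remarks are fine and parallel the paper's; the one substantive fix needed is to carry the cross term through and estimate its absolute value, rather than asserting cancellation.
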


Since this might be of independent interest and the proof is
the same, we will actually prove this proposition in a slightly
more general context. Indeed, consider the Schr\"{o}dinger
operator with magnetic potential on $\Om$
\begin{equation}\label{E:16}
H=(\vec{p}-\vec{a})^2+V,\quad V\in L^\infty_\text{loc}(\Om) ,\quad
\vec{a}\in C^1_\text{loc}(\Om) , \quad \vec{p}=-i \nabla,
\end{equation}
defined on $ \mathcal D(H)=C_0^\infty(\Omega)$ and, for $\varphi,\psi\in W^{1,2}$,
 the associated quadratic form
\begin{equation}\label{E:17}
h[\varphi,\psi]=\int_\Om\!\!\overline{(\vec{p}-\vec{a})\varphi}\cdot (\vec{p}-\vec{a})\psi\,dx
+\int_\Om \bar\varphi \cdot V\psi\,dx\,.
\end{equation}
Note that if $\varphi$ and $\psi$ are both in $C_0^2(\Om)$, then
$$
h[\vfi,\psi]=\int_\Om\!\! \overline{\vfi(x)}\,(H\psi)(x)\,dx\,.
$$

One of the main technical ingredients is the following simple identity~\cite{Agm}:
\begin{lemma}\label{L:2}
Let $\psi$ be a weak solution of $H\psi=E\psi$, and let $f=\bar f\in C_0^1(\Om)$. Then
\begin{equation}\label{E:18}
(h-E)[f\psi,f\psi]=\langle \psi, |\vec\nabla f|^2\psi \rangle\,.
\end{equation}
\end{lemma}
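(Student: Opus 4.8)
The plan is to expand the left-hand side directly using the definition of the quadratic form and then use the weak eigenvalue equation to kill the cross terms. First I would write $(h-E)[f\psi,f\psi]$ using \eqref{E:17}: the kinetic part is $\int_\Om |(\vec p-\vec a)(f\psi)|^2\,dx$, and since $f$ is real-valued, the Leibniz rule gives $(\vec p-\vec a)(f\psi) = (\vec p f)\psi + f(\vec p-\vec a)\psi = -i(\nabla f)\psi + f(\vec p-\vec a)\psi$. Squaring the modulus and integrating produces three pieces: $\int |\nabla f|^2|\psi|^2$, the cross term $\int \big[\overline{(-i\nabla f)\psi}\cdot f(\vec p-\vec a)\psi + \text{c.c.}\big]$, and $\int f^2 |(\vec p-\vec a)\psi|^2$. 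Adding the potential and $-E$ contributions, the last piece together with $\int f^2 V|\psi|^2 - E\int f^2|\psi|^2$ is exactly the formal expression $\int \overline{f\psi}\,(H-E)\psi$ modulo an integration by parts relating $\int f^2|(\vec p-\vec a)\psi|^2$ to $h[f^2\psi,\psi]$.

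The key identity to establish is that the cross term plus the ``$f^2$ kinetic'' term reassemble into $h[f^2\psi,\psi] + \int|\nabla f|^2|\psi|^2$, so that
\[
(h-E)[f\psi,f\psi] = (h-E)[f^2\psi,\psi] + \langle\psi,|\nabla f|^2\psi\rangle.
\]
To see this I would compute $h[f^2\psi,\psi]$ directly: its kinetic part is $\int \overline{(\vec p-\vec a)(f^2\psi)}\cdot(\vec p-\vec a)\psi$, and again by Leibniz $(\vec p-\vec a)(f^2\psi) = -i(\nabla f^2)\psi + f^2(\vec p-\vec a)\psi = -2if(\nabla f)\psi + f^2(\vec p-\vec a)\psi$. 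Comparing term by term with the expansion of $h[f\psi,f\psi]$ shows the kinetic and potential contributions match once the $|\nabla f|^2$ term is accounted for; the only subtlety is the real part appearing in the cross terms, which works out because $h[f\psi,f\psi]$ involves $|(\vec p-\vec a)(f\psi)|^2$ (a genuine square, hence real and symmetric) while $h[f^2\psi,\psi]$ has the $f^2$ distributed asymmetrically, and the difference of the two cross terms is precisely what one gets from $\nabla(f^2) = 2f\nabla f$ versus $(\nabla f)\cdot f + f\cdot(\nabla f)$.

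Finally, since $f\in C_0^1(\Om)$ we have $f^2\psi$ serving (after the standard density/regularization argument for weak solutions, $\psi\in H^1_{\text{loc}}$) as a legitimate test object, and the weak equation \eqref{E:4} — extended from $C_0^\infty$ to compactly supported $H^1$ functions — gives $(h-E)[f^2\psi,\psi] = \langle f^2\psi,(H-E)\psi\rangle = 0$. Combining this with the identity above yields \eqref{E:18}. The main obstacle I anticipate is purely a matter of bookkeeping with the magnetic gradient and making sure the cross terms are handled as real parts correctly; there is also the minor technical point that $f\psi$ need not be smooth, so one must justify that \eqref{E:4} applies with $f^2\psi$ in place of $\varphi$, which is routine by mollification given $\psi\in H^1_{\text{loc}}$ and $f$ compactly supported.
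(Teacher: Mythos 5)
Your route is a direct-expansion version of the same argument; the paper packages the computation via the commutator identity $[f,[H,f]]=2|\nabla f|^2$, but the underlying mechanism is identical — use the weak eigenvalue equation $(h-E)[\vfi,\psi]=0$ (extended by density to $\vfi=f^2\psi$) to kill the main term, leaving the $|\nabla f|^2$ piece.

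There is one imprecision worth flagging. The intermediate identity you state,
\[
(h-E)[f\psi,f\psi] = (h-E)[f^2\psi,\psi] + \langle\psi,|\nabla f|^2\psi\rangle,
\]
is not an algebraic identity as written. Expanding the kinetic parts via Leibniz gives
\[
h[f\psi,f\psi]=\int\Bigl(|\nabla f|^2|\psi|^2 + f^2|(\vec p-\vec a)\psi|^2 + 2f\,\nabla f\cdot\Re\bigl[i\bar\psi(\vec p-\vec a)\psi\bigr]\Bigr)dx+\int Vf^2|\psi|^2\,dx
\]
while
\[
h[f^2\psi,\psi]=\int\Bigl(2i\,f\,\nabla f\cdot\bar\psi(\vec p-\vec a)\psi + f^2|(\vec p-\vec a)\psi|^2\Bigr)dx+\int Vf^2|\psi|^2\,dx,
\]
whose first term is in general complex. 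The correct identity is therefore
\[
(h-E)[f\psi,f\psi] = \Re\bigl\{(h-E)[f^2\psi,\psi]\bigr\} + \langle\psi,|\nabla f|^2\psi\rangle .
\]
You half-notice this (``the $f^2$ distributed asymmetrically'') but then discard it, and the displayed identity is what goes into the final deduction. The proof survives because the weak eigenvalue equation gives $(h-E)[f^2\psi,\psi]=0$, hence also $\Re\{(h-E)[f^2\psi,\psi]\}=0$; but the step should be phrased with the real part. This is precisely the role of the step $\langle\psi,f[H,f]\psi\rangle=\Re\langle\psi,f[H,f]\psi\rangle=\frac12\langle\psi,[f,[H,f]]\psi\rangle$ in the paper's proof, which you should mirror explicitly. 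The density/mollification remark for passing from $C_0^\infty$ test functions to $f^2\psi$ with $\psi\in H^1_{\mathrm{loc}}$, $f\in C_0^1$, is correct and matches the paper's use of density of $C_0^\infty$ in $W^{1,2}_0$.
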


\begin{proof}
Consider first $f\in C_0^\infty$ and let $\vfi\in C_0^\infty$. Then
$$
(h-E)[\vfi,f\psi]=\langle(H-E)\vfi,f\psi\rangle=\langle f(H-E)\vfi,\psi\rangle\,.
$$
Since $[f,\vec p-\vec a]=i\nabla f$ on $C_0^\infty$, we get that
\begin{equation*}
[f,H]=[f,(\vec p-\vec a)^2]=i\bigl((\vec p-\vec a)\cdot\nabla f+\nabla f\cdot(\vec p-\vec a)\bigr),
\end{equation*}
and so, if we remember that $\psi$ is a weak solution,
\begin{equation*}
(h-E)[\vfi,f\psi]=\langle[f,H]\vfi,\psi\rangle=\langle\vfi,[H,f]\psi\rangle\,.
\end{equation*}
Since $f\psi\in W_0^{1,2}(\Om)$ and $C_0^\infty$ is dense in the $W^{1,2}$ topology,
the identity above implies that
\begin{equation}\label{E:19}
\begin{aligned}
(h-E)[f\psi,f\psi]
&=\langle\psi,f[H,f]\psi\rangle=\Re \langle\psi,f[H,f]\psi\rangle\\
&=\frac12\, \bigl\langle\psi,\bigl(f[H,f]-[H,f]f\bigr)\psi\bigr\rangle\\
&=\frac12\, \bigl\langle\psi,[f,[H,f]]\psi\bigr\rangle\,.
\end{aligned}
\end{equation}

Finally, a straightforward computation shows that
\begin{equation*}
[f,[H,f]]=-i[f,(\vec p-\vec a)\cdot\nabla f+\nabla f\cdot(\vec p-\vec a)]=
-i\,(2i\nabla f\cdot\nabla f)=2|\nabla f|^2\,,
\end{equation*}
which completes the proof.
\end{proof}

\begin{proof}[Proof of Proposition~\ref{P:1}]
As in~\cite{Agm}, we will now choose a function $f$ to plug
into the formula \eqref{E:18}. More precisely, let
$$f=e^g\phi,$$
where $g\in C^1(\Omega)$, real-valued, is the function from the statement of the proposition,
and $\phi\in C_0^\infty(\Om)$, $0\leq\phi\leq1$, is a cut-off function,
$$
\phi(x)=
\left\{
  \begin{array}{ll}
    0, & x\notin\Om_\rho \\
    1, & x\in\Om_{2\rho}.
  \end{array}
\right.
$$
Taking $\phi$ of the form $\phi (x)=k(d(x))$ where
$$
k(t)=
\left\{
  \begin{array}{ll}
    0, & 0 \leq t \leq \rho \\
    1, & t \geq 2\rho
  \end{array}
\right.
$$
one sees that for  $\rho$ small enough (say $\rho < \frac{d_{\Om}}{2}$)
\begin{equation}\label{E:10}
|\nabla \phi|\leq \frac{K_1}{\rho},
\end{equation}
with $K_1$ an absolute constant.
Then
$$
|\nabla f|^2=f^2|\nabla g|^2+m,
$$
where
$$
m=2fe^g\nabla g\cdot\nabla \phi+e^{2g}|\nabla\phi|^2.
$$
Estimating directly leads to:
\begin{equation*}
\begin{aligned}
|\langle\psi,m\psi\rangle|
&\leq \langle\psi,|m|\psi\rangle
=\int_{\Om}\!\!|\psi|^2\bigl(2e^{2g}\phi|\nabla g|\,|\nabla\phi|+e^{2g}|\nabla\phi|^2\bigr)\,dx\\
&\leq\frac{K_1}{\rho}\int_{\Om_\rho\setminus\Om_{2\rho}} \bigl|\psi e^g\bigr|^2\biggl(2|\nabla g|+
\frac{K_1}{\rho}\biggr)\,dx
\end{aligned}
\end{equation*}
where in the last inequality we used, as well as the fact that
$\nabla\phi\equiv0$ on
 $\bigl(\Om\setminus\Om_\rho\bigr)\bigcup\Om_{2\rho}$. But now
recall that the Agmon condition \eqref{E:14} was that
$$
(h-E)[\vfi,\vfi]-\int_\Omega\!|\vfi(x)|^2|\nabla g(x)|^2\,dx\geq c\|\varphi\|^2\,,
$$
with $c$ independent of $\vfi$ and $\rho$. Using the density of
$C_0^\infty$ in $W^{1,2}_0$, we obtain that
\begin{equation}\label{E:11}
(h-E)[f\psi,f\psi]-\langle f\psi,|\nabla g|^2f\psi\rangle\geq c\|f\psi\|^2\,.
\end{equation}
Since
$$
(h-E)[f\psi,f\psi]-\langle f\psi,|\nabla g|^2f\psi\rangle=\langle\psi,m\psi\rangle\,,
$$
we obtain
$$
\frac{K_1}{\rho}\int_{\Om_\rho\setminus\Om_{2\rho}} \bigl|\psi e^g\bigr|^2\biggl(2|\nabla g|+
\frac{K_1}{\rho}\biggr)\,dx
\geq|\langle\psi,m\psi\rangle|\geq c\int_\Om\bigl|f\psi\bigr|^2\,dx\,,
$$
which, if we recall the choice of $f$ made at the beginning of the
proof, leads directly to the claim of the proposition.
\end{proof}

\begin{theorem}\label{T:2}
Consider an open, bounded domain $\Omega\subset\IR^n$ with
$C^2$-smooth boun\-da\-ry, and the Schr\"{o}dinger operator
\begin{equation}\label{E:15}
H=-\Delta+V,
\end{equation}
with $V\in L^\infty_\emph{loc}$ and domain $\mathcal{D}(H)=C_0^\infty(\Omega)$.
Assume that there exist $E\in\IR$ and $c>0$ such that
\begin{equation}\label{E:13}
\langle \varphi,(H-E)\varphi\rangle-\int_\Om |\nabla g(x)|^2 |\varphi(x)|^2\geq c\|\varphi\|^2\,,
\end{equation}
for all $\varphi\in C_0^\infty(\Om)$, where $g(x)=G(d(x))$ for some $G$ satisfying condition $(\Sigma)$.

If $\psi$ is a weak solution of $H\psi=E\psi$, then $\psi\equiv0$.
\end{theorem}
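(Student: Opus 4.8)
The plan is to iterate the weighted estimate \eqref{E:5} of Proposition~\ref{P:1} over a dyadic family of shrinking collars of $\partial\Om$, and to recognize the resulting chain of inequalities as precisely the divergent series in condition ($\Sigma.2$). To begin, observe that hypothesis \eqref{E:13} is exactly the Agmon condition \eqref{E:14} (with vanishing magnetic potential), so Proposition~\ref{P:1} applies and \eqref{E:5} holds for every sufficiently small $\rho$, with a constant $K=K(c)$ independent of $\rho$. I will run the argument for $\psi\in L^2(\Om)$; this is the case needed for the application to essential self-adjointness, where $\psi$ ranges over $\ker(H^*-E)\subset L^2(\Om)$, and it is the one in which the conclusion is read off from the finiteness of the total mass of $\psi$.

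Fix $\rho_0\le d_0/2$, and for $n\ge1$ put $\rho_n=2^{-n}\rho_0$ and $T_n=\Om_{\rho_n}\setminus\Om_{2\rho_n}=\{x\in\Om : \rho_n<d(x)\le 2\rho_n\}$, so the $T_n$ are pairwise disjoint and $\bigcup_{n\ge1}T_n=\{x\in\Om : 0<d(x)\le\rho_0\}$. Two elementary facts about the shell $T_n$ are needed. First, on $T_n$ we have $d(x)\le 2\rho_n\le\rho_0<d_0$, so by ($\Sigma.1$) and \eqref{E:211}, $|\nabla g(x)|=|G'(d(x))|\,|\nabla d(x)|\le 1/d(x)\le 1/\rho_n$, whence $\rho_n^{-1}+|\nabla g|\le 2\rho_n^{-1}$ on $T_n$; second, ($\Sigma.1$) makes $G$ nondecreasing on $(0,d_0)$, so $e^{g(x)}=e^{G(d(x))}\le e^{G(2\rho_n)}$ there. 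Applying \eqref{E:5} with $\rho=\rho_n$, and using that $\Om_{2\rho_n}\supseteq\Om_{\rho_0}$ with nonnegative integrand, the left side is at least $A_0:=\int_{\Om_{\rho_0}}|e^{g}\psi|^2\,dx$, so
\begin{equation*}
A_0\ \le\ \frac{2K}{\rho_n^2}\int_{T_n}|e^{g}\psi|^2\,dx\ \le\ \frac{2K}{\rho_n^2}\,e^{2G(2\rho_n)}\int_{T_n}|\psi|^2\,dx ,
\end{equation*}
and hence, since $\rho_n^2=4^{-n}\rho_0^2$,
\begin{equation*}
\int_{T_n}|\psi|^2\,dx\ \ge\ \frac{A_0}{2K}\,\rho_n^2\,e^{-2G(2\rho_n)}\ =\ \frac{A_0\rho_0^2}{2K}\,4^{-n}\,e^{-2G(2^{-n+1}\rho_0)} .
\end{equation*}

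Now I would sum over $n\ge1$. Since the $T_n$ are disjoint subsets of $\Om$, the left side is at most $\|\psi\|_{L^2(\Om)}^2<\infty$; on the right, $2^{-n+1}\rho_0=2^{-(n-1)}\rho_0$, so after the substitution $k=n-1$ the sum becomes a fixed positive multiple of $\sum_{k\ge0}4^{-k}e^{-2G(2^{-k}\rho_0)}$, which is infinite because its tail $\sum_{k\ge1}4^{-k}e^{-2G(2^{-k}\rho_0)}$ diverges by ($\Sigma.2$). This is impossible unless $A_0=0$. But $g=G(d)$ is finite, hence $e^{g}>0$, on the relatively compact set $\Om_{\rho_0}$, so $A_0=0$ forces $\psi\equiv0$ on $\Om_{\rho_0}$; since $\rho_0\le d_0/2$ was arbitrary and $\Om_{\rho_0}\uparrow\Om$ as $\rho_0\to0^+$, we conclude $\psi\equiv0$ on $\Om$.

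The one point demanding care — and the place where the sharpness of ($\Sigma.2$) is used — is the bookkeeping in the two displayed inequalities: the factor $e^{2G(2\rho_n)}$ picked up when $e^{g}$ is estimated on the shell must cancel exactly against the $\rho_n^{-2}$ coming from the two $\rho^{-1}$ factors in \eqref{E:5}, and this cancellation is what turns the iteration into the series $\sum 4^{-n}e^{-2G(2^{-n}\rho_0)}$ of \eqref{E:sigma3}. A wasteful use of \eqref{E:5} — e.g. pulling $e^{g}$ out against the full norm $\|\psi\|_{L^2(\Om)}$, or working with a weight of smaller gradient — breaks this cancellation and leaves only the (always valid, hence useless) convergence of $\sum 4^{-n}$, losing the logarithmic gain. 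The remaining items are routine: that the scales $\rho_n\le d_0/2$ are admissible in Proposition~\ref{P:1}, that ($\Sigma.1$) together with \eqref{E:211} indeed gives $|\nabla g|\le 1/d$ on $\{d<d_0\}$, and the sole genuine caveat that $\psi\in L^2(\Om)$ is invoked only to guarantee that the collar mass $\int_{\{0<d(x)\le\rho_0\}}|\psi|^2$ is finite.
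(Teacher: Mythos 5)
Your proof is correct and follows essentially the same strategy as the paper's: iterate Proposition~\ref{P:1} over the dyadic shells $T_n=\Om_{\rho_n}\setminus\Om_{2\rho_n}$ and invoke the divergence in ($\Sigma.2$) to force $\int_{\Om_{\rho_0}}|\psi|^2=0$ for each $\rho_0$. The only (cosmetic) difference is that the paper first normalizes the weight to $g_\rho=g-G(\rho)$ before applying the proposition, while you estimate $e^g$ directly on each shell via $e^{g}\le e^{G(2\rho_n)}$, which merely shifts the summation index by one and leaves the divergence unchanged; you also correctly make explicit the assumption $\psi\in L^2(\Om)$ that the paper uses implicitly in \eqref{E:28}.
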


\begin{proof}
Let $d_0>0$ be the constant that appears in condition $(\Sigma)$
for the function $G$ from the hypothesis. Fix, for the time being,
$0<\rho_0 \leq d_0/2$, and let $\rho>0$ be such that $2\rho \leq
\rho_0$. Then define a ``normalized'' $G$ function:
\begin{equation*}\label{E:20}
G_\rho(t)=G(t)-G(\rho)\,,
\end{equation*}
and set
$$
g_\rho(x)=G_\rho\bigl( d(x)\bigr).
$$
 Note that for all $x\in\Om$ we have
\begin{equation}\label{E:21}
\nabla g_\rho(x)=G'\bigl(d(x)\bigr)\nabla d(x).
\end{equation}
This, together with condition $(\Sigma.1)$ for $G$,  and the fact
that $|\nabla d(x)|\leq 1$ for $d(x)<d_\Om$, implies in particular
that
\begin{equation}\label{E:22}
\bigl|\nabla g_\rho(x)\bigr|\leq\frac{1}{d(x)}\qquad\text{for}\quad x\in\Omega\setminus\Omega_{d_0/2}\,.
\end{equation}
On the other hand, look at $x\in\Om_{\rho_0}$. Since
\begin{equation*}
\rho_0\leq d(x),
\end{equation*}
 condition ($\Sigma.1$) implies that
\begin{equation}\label{E:23}
g_\rho(x)\geq G_\rho(\rho_0)=G(\rho_0)-G(\rho)\,,
\end{equation}
and so
\begin{equation}\label{E:26}
e^{2g_\rho(x)}\geq e^{2G(\rho_0)}\cdot e^{-2G(\rho)},\qquad \text{for all}\quad x\in\Omega_{\rho_0}\,.
\end{equation}

Therefore
\begin{equation*}
e^{2G(\rho_0)}\cdot e^{-2G(\rho)}
\int_{\Omega_{\rho_0}}|\psi(x)|^2\,dx \leq
\int_{\Omega_{\rho_0}}\bigl|e^{g_\rho(x)}\psi(x)\bigr|^2\,dx \leq
\int_{\Omega_{2\rho}}\bigl|e^{g_\rho(x)}\psi(x)\bigr|^2\,dx\,,
\end{equation*}
where we used the fact that $2\rho\leq\rho_0$ and so
$\Om_{\rho_0}\subset\Om_{2\rho}$. Now note that $\nabla
g_\rho=\nabla g$, and so   $g_\rho$ satisfies \eqref{E:13} with
the same $E$ and $c$ as $g$. In particular, one can apply
Proposition~\ref{P:1} and obtain
\begin{equation*}
\int_{\Omega_{2\rho}}\bigl|e^{g_\rho(x)}\psi(x)\bigr|^2\,dx\leq
\frac{K(c)}{\rho}\int_{\Omega_\rho\setminus\Omega_{2\rho}}
\biggl(\frac1\rho+\bigl|\nabla
g_\rho(x)\bigr|\biggr)\bigl|e^{g_\rho(x)}\psi(x)\bigr|^2\,dx\,.
\end{equation*}
Since $0<\rho<2\rho<\rho_0\leq d_0/2$, it follows that $\Om_\rho\setminus\Om_{2\rho}\subset\Om\setminus
\Om_{d_0/2}$ and so \eqref{E:22} implies that
\begin{equation*}
\frac{K(c)}{\rho}\int_{\Omega_\rho\setminus\Omega_{2\rho}}
\biggl(\frac1\rho+\bigl|\nabla
g_\rho(x)\bigr|\biggr)\bigl|e^{g_\rho(x)}\psi(x)\bigr|^2\,dx \leq
\frac{\tilde
K(c)}{\rho^2}\int_{\Omega_\rho\setminus\Omega_{2\rho}}\bigl|\psi(x)\bigr|^2\,dx\,,
\end{equation*}
where we also used the fact that, for $x\in\Om_{\rho}\setminus\Om_{2\rho}$,
$$
g_{\rho}(x)\leq G_{\rho}(2\rho)=G(2\rho)-G(\rho)=\int_{\rho}^{2\rho}\! G'(t)\,dt\leq\int_{\rho}^{2\rho}\!
\frac1t\,dt=\log 2.
$$

Putting it all together, we get that
\begin{equation}\label{E:27}
K_2(c,\rho_0)\cdot \rho^2 e^{-2G(\rho)}
\int_{\Omega_{\rho_0}}|\psi(x)|^2\,dx \leq
\int_{\Omega_\rho\setminus\Omega_{2\rho}}\bigl|\psi(x)\bigr|^2\,dx\,.
\end{equation}

Now, let $n\geq1$ be an integer, and set
$$
\rho_n=\frac{1}{2^n}\rho_0\,.
$$
So $2\rho_n=\rho_{n-1}$, and we get
$$
\bigcup_{n=1}^M \bigl(\Om_{\rho_n}\setminus\Om_{2\rho_n}\bigr)
=\bigcup_{n=1}^M \bigl(\Om_{\rho_n}\setminus\Om_{\rho_{n-1}}\bigr)=\Om_{\rho_M}\setminus\Om_{\rho_0}\subset
\Om
$$
So using \eqref{E:27} successively with $\rho=\rho_n$, $1\leq n\leq M$, and summing leads to
\begin{equation}\label{E:28}
\rho_0^2K_2(c,\rho_0)\biggl(\sum_{n=1}^M 4^{-n}
e^{-2G(2^{-n}\rho_0)}\biggr) \int_{\Omega_{\rho_0}}|\psi(x)|^2\,dx
\leq \int_{\Om} \bigl|\psi(x)\bigr|^2\,dx<\infty\,.
\end{equation}
But from condition ($\Sigma.2$) we know that the series
$\sum_{n\geq1}4^{-n} e^{-2G(2^{-n}\rho_0)}$ diverges, and so we
find that
\begin{equation}\label{E:29}
\int_{\Omega_{\rho_0}} \bigl|\psi(x)\bigr|^2\,dx=0.
\end{equation}
But $\rho_0>0$ was arbitrary, and so by taking $\rho_0\to0$ it follows that
\begin{equation}
\int_\Omega \bigl|\psi(x)\bigr|^2\,dx=0\,,
\end{equation}
as claimed.
\end{proof}

%%%%%%%%%%%%%%%%%%%%%%%%%%%%%%%%%%%%%%%%%%%%%%%%%%%%%%%%%%%%%%%%%%%%%%%%%%%%%%%%%%
\section{Proofs of the main theorems}\label{S:4}
%%%%%%%%%%%%%%%%%%%%%%%%%%%%%%%%%%%%%%%%%%%%%%%%%%%%%%%%%%%%%%%%%%%%%%%%%%%%%%%%%%

Our strategy in approaching Theorem~\ref{T:1} consists of
combining Agmon-type decay estimates for (weak) eigenfunctions
(see Theorem~\ref{T:2}) with multidimensional Hardy inequalities.
More precisely, for $H$ as above, the fundamental criterion for
self-adjointness tells us that Theorem~\ref{T:1} follows from the
following
\begin{lemma}\label{S4L1}
With the hypotheses of Theorem~\ref{T:1}, there exists an $E<0$
such that for every $\psi\in L^2(\Omega)$ the condition
\begin{equation}\label{E:41}
\langle\psi,(H-E)\varphi\rangle=0,\qquad\text{for every}\quad\varphi\in C_0^\infty(\Omega)
\end{equation}
implies that $\psi\equiv0$.
\end{lemma}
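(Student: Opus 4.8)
The plan is to deduce the lemma from the Agmon-type vanishing theorem, Theorem~\ref{T:2}. The hypotheses of Theorem~\ref{T:1} furnish a function $G$ satisfying condition $(\Sigma)$ with $V=V_1+V_2$ as in \eqref{E:2}--\eqref{E:3}; I would set $g(x)=G(d(x))$ and look for $E<0$ and $c>0$ for which the form inequality \eqref{E:13} of Theorem~\ref{T:2} holds for all $\varphi\in C_0^\infty(\Om)$. Granting this, suppose $\psi\in L^2(\Om)$ satisfies \eqref{E:41}. Then $(-\Delta+V-E)\psi=0$ in $\mathcal D'(\Om)$, so $\Delta\psi=(V-E)\psi\in L^2_\text{loc}(\Om)$ (here $V\in L^\infty_\text{loc}$), and interior elliptic regularity gives $\psi\in H^2_\text{loc}(\Om)\subset H^1_\text{loc}(\Om)$. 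Hence $\psi$ is a weak solution of $H\psi=E\psi$ in the sense of Theorem~\ref{T:2}, so $\psi\equiv0$, which is the assertion of the lemma (via the fundamental criterion for self-adjointness, this is exactly the input needed for Theorem~\ref{T:1}).

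To verify \eqref{E:13}, first observe $\nabla g(x)=G'(d(x))\,\nabla d(x)$ a.e.\ in $\Om$; since $(\Sigma.1)$ forces $G'\equiv0$ on $[d_0,\infty)$ with $d_0\le d_\Om$, while $|\nabla d(x)|\le1$ on $\{d<d_\Om\}$, we get $|\nabla g(x)|^2\le G'(d(x))^2$ a.e.\ in $\Om$. For $\varphi\in C_0^\infty(\Om)$ write $\langle\varphi,(H-E)\varphi\rangle=\int_\Om|\nabla\varphi|^2+\int_\Om(V_1+V_2-E)|\varphi|^2$ and insert \eqref{E:3}, namely $G'(d(x))^2\le V_1(x)+\tfrac14\,d(x)^{-2}$; the $V_1$ contribution cancels and one is left with
\begin{equation*}
\langle\varphi,(H-E)\varphi\rangle-\int_\Om|\nabla g|^2|\varphi|^2 \;\geq\; \int_\Om|\nabla\varphi|^2-\frac14\int_\Om\frac{|\varphi|^2}{d^2}+\int_\Om(V_2-E)|\varphi|^2 .
\end{equation*}

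Now I would invoke the multidimensional Hardy inequality on the bounded $C^2$ domain $\Om$ \cite{Dav}, in the form $\int_\Om|\nabla\varphi|^2\geq\tfrac14\int_\Om d^{-2}|\varphi|^2-C_\Om\|\varphi\|^2$, so that the right-hand side above is at least $\int_\Om(V_2-E-C_\Om)|\varphi|^2\geq\bigl(-\|V_2\|_{L^\infty}-E-C_\Om\bigr)\|\varphi\|^2$. Choosing $E\leq-\|V_2\|_{L^\infty}-C_\Om-1$, which is in particular negative, makes this $\geq\|\varphi\|^2$, so \eqref{E:13} holds with $c=1$ and Theorem~\ref{T:2} finishes the proof. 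The one genuinely delicate point is the Hardy step: on a general bounded domain the sharp constant $\tfrac14$ in $\int|\nabla\varphi|^2\geq\tfrac14\int d^{-2}|\varphi|^2$ need not be attained without a lower-order correction (reflecting the curvature of $\partial\Om$), so one must use the $C^2$-boundary version carrying the additive term $-C_\Om\|\varphi\|^2$; this is harmless here precisely because $E$ and the bounded part $V_2$ supply exactly the slack needed to absorb $C_\Om$. The remaining ingredients — the elliptic-regularity remark and the cancellation of $V_1$ — are routine.
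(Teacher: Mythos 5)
Your proof is correct and follows the same strategy as the paper: reduce to the Agmon-type vanishing result (Theorem~\ref{T:2}) by first establishing interior regularity so that $\psi$ qualifies as a weak solution, and then verifying the form inequality \eqref{E:13} via the decomposition $V=V_1+V_2$, the pointwise bound $|\nabla g|^2\le G'(d)^2$ from $(\Sigma.1)$ and $|\nabla d|\le1$, and the multidimensional Hardy inequality of Theorem~\ref{T:Hardy}. The one organizational difference is that the paper's proof of Lemma~\ref{S4L1} is deliberately partial: it establishes only the regularity step (using an elementary Riesz-lemma argument to get $\psi\in H^2_{\mathrm{loc}}$, rather than citing interior elliptic regularity as you do), and defers the verification of \eqref{E:13} and the choice of $E$ to the subsequent proof of Theorem~\ref{T:1}. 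You fold that verification directly into the lemma's proof, which makes your argument more self-contained as a standalone proof of the lemma; the mathematical content, including the role of the additive constant $A=C_\Om$ in the Hardy inequality and the choice $E=-\|V_2\|_{L^\infty}-A-1$ yielding $c=1$, is identical.
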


\begin{proof}
In view of Theorem \ref{T:2} the only thing to be proved is that
for $\psi \in L^2(\Omega)$, \eqref{E:4} implies that $\psi $ is a
weak solution of $(H-E)\psi=0 $ i.e. $\psi \in H^1_{loc}(\Omega)$.
This is an interior  regularity result for elliptic equations and
follows from general theory . In our simple setting one can see by
elementary means that $\psi \in H^2_{loc}(\Omega)$. Indeed let
$\tilde \Omega \subset  \Omega $, $dist (\tilde \Omega,
\partial  \Omega) >0$. Then $V \in L^{\infty}(\tilde \Omega)$ and
from $\langle \psi, (H-E)\varphi \rangle =0$ it follows
$$
|\langle \psi, (-\Delta +1)\varphi \rangle |= |\langle \psi, (V-E)\varphi \rangle |
\leq K_{\tilde \Omega, E}\Vert \varphi \Vert
$$
for all $\varphi \in C_0^\infty (\tilde \Omega)$ which via Riesz lemma implies
$$
\langle \psi, (-\Delta +1)\varphi \rangle =\langle \Phi,\varphi \rangle
$$
for some $\Phi \in L^2 (\tilde \Omega)$. This means that the
distribution $(-\Delta +1)\Psi$ on $C_0^\infty (\tilde \Omega)$ is
represented b y a $L^2 (\tilde \Omega)$ function and the proof is
finished.
\end{proof}

The following multidimensional Hardy inequality will allow us to
complete the proof of our main theorem:
\begin{theorem}[Multidimensional Hardy Inequality]\label{T:Hardy}
Let $\Om\subset\IR^n$ be a bounded open set with
$C^2$-smooth boundary. Then there exists a constant $A=A(\Om)\in\IR$ such that
\begin{equation}\label{E:S4eq1}
\frac14\int_\Om\!\!\frac{|\varphi(x)|^2}{d(x)^2}\,dx\leq
\int_\Om\!\! |\nabla\varphi(x)|^2\,dx+A\|\varphi\|^2
\end{equation}
for every $\varphi\in C_0^\infty(\Omega)$.
\end{theorem}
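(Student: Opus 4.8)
The plan is to reduce the multidimensional Hardy inequality on $\Om$ to the one-dimensional Hardy inequality, exploiting the $C^2$ regularity of $\partial\Om$ near the boundary and compactness in the interior. First I would fix the constant $d_\Om>0$ from the excerpt, so that on the collar $\Om_{d_\Om}^c:=\{x\in\Om : d(x)<d_\Om\}$ the distance function $d$ is $C^2$ and $|\nabla d(x)|=1$. On this collar one has the well-known normal-coordinate diffeomorphism $x\leftrightarrow(y,t)$ with $y\in\partial\Om$, $t=d(x)\in(0,d_\Om)$, under which the Euclidean volume element becomes $J(y,t)\,d\sigma(y)\,dt$ with $J$ bounded above and below by positive constants (depending only on $\Om$) and $C^1$ in $t$; moreover $|\nabla\varphi|^2\geq |\partial_t\varphi|^2 = |\nabla d\cdot\nabla\varphi|^2$ pointwise.

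The core estimate is then the classical one-dimensional Hardy inequality: for $u\in C_0^\infty(0,\infty)$, $\int_0^\infty |u'(t)|^2\,dt\geq \tfrac14\int_0^\infty \frac{|u(t)|^2}{t^2}\,dt$. I would apply this, for each fixed $y\in\partial\Om$, to the function $t\mapsto \varphi(x(y,t))$ — after first cutting off with a smooth function $\chi(t)$ equal to $1$ on $(0,d_\Om/2)$ and supported in $(0,d_\Om)$, so that $\chi\varphi$ extends by zero past $t=d_\Om$. Integrating the resulting one-dimensional inequality in $y$ against $d\sigma(y)$, and using the two-sided bounds on the Jacobian $J$ to pass between $\int\!\!\int |\partial_t(\chi\varphi)|^2\,d\sigma\,dt$ and $\int_{\Om}|\partial_t(\chi\varphi)|^2 J^{-1}\,dx \le C\int_\Om |\nabla\varphi|^2\,dx + (\text{lower order})$, yields
\begin{equation*}
\frac14\int_{\{d(x)<d_\Om/2\}}\frac{|\varphi(x)|^2}{d(x)^2}\,dx \leq \int_\Om |\nabla\varphi(x)|^2\,dx + C_1\|\varphi\|^2,
\end{equation*}
where the $C_1\|\varphi\|^2$ term absorbs the error from differentiating the cutoff $\chi$ (supported where $d(x)\geq d_\Om/2$, hence harmless) and the discrepancy between $J$ and a constant. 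On the complementary region $\{d(x)\geq d_\Om/2\}$ one trivially has $\tfrac14 d(x)^{-2}\leq \tfrac14 (d_\Om/2)^{-2}$, so $\tfrac14\int_{\{d\geq d_\Om/2\}}d(x)^{-2}|\varphi|^2\,dx \leq C_2\|\varphi\|^2$; adding the two pieces and setting $A=A(\Om)=C_1+C_2$ gives \eqref{E:S4eq1}.

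The main obstacle is the bookkeeping in the change of variables: one must verify that the normal exponential map is a $C^1$-diffeomorphism of the collar with Jacobian bounded away from $0$ and $\infty$, that the factor relating $|\partial_t\varphi|^2$ in $x$-coordinates to $|u'(t)|^2$ in $(y,t)$-coordinates is controlled, and that the cutoff and Jacobian-comparison errors genuinely reduce to a multiple of $\|\varphi\|^2$ rather than something involving $\nabla\varphi$ near the boundary. All of these are standard consequences of $C^2$ boundary regularity (see the Appendix to Chapter~14 in \cite{GilTru}), so I would cite them rather than reprove them; alternatively, since this theorem is attributed to \cite{Dav}, one may simply invoke that reference and include the above only as a sketch for completeness. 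There is no difficulty with the optimal constant $\tfrac14$ here, since it is exactly the one-dimensional Hardy constant and the extra $+A\|\varphi\|^2$ is permitted.
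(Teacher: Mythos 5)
The paper does not prove Theorem~\ref{T:Hardy} at all; it simply cites \cite{Dav}. Your fallback of ``simply invoking that reference'' is therefore exactly what the authors do, and is a perfectly acceptable answer. Your sketch of a self-contained proof by collar normal coordinates plus the one-dimensional Hardy inequality is a genuinely different (and more explicit) route, and the overall architecture -- boundary collar, 1D Hardy in the normal variable, cutoff, trivial bound on the interior -- is sound. However, as written it contains a real gap precisely at the step you flag as ``bookkeeping.''

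The gap: if you pass between the $(y,t)$-integral and the $x$-integral merely by sandwiching $J$ between positive constants $c\le J\le C$, then after applying the 1D Hardy inequality to $t\mapsto\chi(t)\tilde\varphi(y,t)$ and converting back you obtain the inequality with constant $\frac{c}{4C}$ (further degraded by the cutoff cross terms) instead of $\frac14$, and since $J\not\equiv1$ this is strictly smaller; an additive $A\|\varphi\|^2$ cannot repair a lost multiplicative constant. To preserve $\frac14$ you must weave $J$ into the 1D inequality. For instance, apply 1D Hardy to $u(t)=\sqrt{J(y,t)}\,\chi(t)\tilde\varphi(y,t)$: expanding and integrating by parts in $t$ (licit, since $\chi\tilde\varphi$ vanishes near $t=0$ and $t=d_\Om$) yields
\begin{equation*}
\int_0^{d_\Om}\!|\partial_t(\chi\tilde\varphi)|^2\,J\,dt
=\int_0^{d_\Om}\!|\partial_t u|^2\,dt+\int_0^{d_\Om}\!u^2\,R(y,t)\,dt,
\end{equation*}
where $R$ is built from $J'/J$ and $J''/J$. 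Because $\partial\Om$ is $C^2$, $J(y,t)=\prod_i\bigl(1-t\,\kappa_i(y)\bigr)$ is polynomial in $t$ with continuous coefficients and bounded below on the collar, so $R$ is uniformly bounded, the error term is $\le M\|\varphi\|^2$, and applying 1D Hardy to $u$ and undoing the substitution recovers the sharp $\frac14$. (A slicker and essentially equivalent alternative is the ground-state transform: take $v=d^{1/2}e^{-\lambda d}$ on the collar, use $\int|\nabla\varphi|^2\ge-\int(\Delta v/v)|\varphi|^2$, and check that boundedness of $\Delta d$ near $\partial\Om$ gives $-\Delta v/v\ge\frac{1}{4d^2}-A$ once $\lambda$ is large.) With either fix in place, the rest of your argument -- absorbing $\chi'$ terms and the interior region into $A\|\varphi\|^2$ -- goes through as you describe.
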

This particular form of the Hardy inequality in domains in $\IR^n$ can be found, for example,
in \cite{Dav}.

Now the proof of Theorem~\ref{T:1} follows very quickly.
\begin{proof}[Proof of Theorem~\ref{T:1}]
From the fundamental criterion for self-adjointness (via
Lemma~\ref{S4L1}) and the Agmon-type Theorem~\ref{T:2}, we
conclude that what we must show in order to complete the proof is
that there exist $E\in\IR$, as well as $c>0$ and a function
$g(x)=G(d(x))$ with $G$ satisfying ($\Sigma$) such that
\begin{equation}\label{E:S4eq2}
\langle \varphi,(H-E)\varphi \rangle -\int_\Omega |\nabla g(x)|^2|\varphi(x)|^2\,dx\geq c\|
\varphi\|^2\,,
\end{equation}
for all $\varphi\in C_0^\infty(\Om)$.

Recall that under the hypotheses of Theorem~\ref{T:1}, the
potential $V=V_1+V_2$ with $V_2\in L^\infty(\Om)$ and
$$
V_1(x)\geq G'\bigl(d(x)\bigr)^2-\frac14\cdot\frac{1}{d(x)^2}\,,
$$
for some $G$ satisfying ($\Sigma$). Using exactly this $G$ to
define the $g$ we need, and applying the result of the
multidimensional Hardy inequality above, we get that for $E\in\IR$
\begin{equation*}
\begin{aligned}
\langle \varphi,(H-E)\varphi \rangle &-\int_\Omega |\nabla g(x)|^2|\varphi(x)|^2\,dx\\
&\geq\int_\Om\Bigl(V_1(x)-G'(d(x))^2+\frac{1}{4d(x)^2}\Bigr)\cdot|\varphi(x)|^2\,dx\\
&\quad+\bigl(-\|V_2\|_{L^\infty}-A-E\bigr)\|\varphi\|^2\\
&\geq \bigl(-\|V_2\|_{L^\infty}-A-E\bigr)\|\varphi\|^2
\end{aligned}
\end{equation*}
On the way we have used the fact that $|\nabla g(x)|^2 \leq G'\bigl(d(x)\bigr)^2$.
So choosing, for example,
\begin{equation}
E=-\|V_2\|_{L^\infty}-A-1
\end{equation}
leads to \eqref{E:S4eq2} being satisfied with $c=1$. This is exactly what we needed, and concludes our proof.
\end{proof}

\begin{proof}[Proof of Theorem~\ref{T:p}]
As already explained in Section ~\ref{S:2}, Theorem ~\ref{T:p}
follows directly from Theorem~\ref{T:1} and a choice of function $G$
which for small $t$ coincides with (see \eqref{Ex:31}):
\begin{equation*}
\ln t+\frac12\cdot\sum_{j=2}^p L_j(t)+\int_t \tilde f(u)\,du\,,
\end{equation*}
where we recall that the functions $L_j$ were defined in
\eqref{E:DefnLp}, and $\tilde f$, which is to be found, must
satisfy \eqref{Ex:24}.

More precisely, let
\begin{equation}\label{scL}
{\mathcal L}_p(t) = \sum_{k=2}^p\Bigl(\prod_{j=1}^{k-1}L_j(t)\Bigr)^{-1}
\end{equation}
defined for $0<t<e_p^{-1}$. Notice that $\lim_{t \rightarrow
0}\mathcal{L}_p(t) =0$ and moreover $\frac{\mathcal{L}_p(t)^2}{t}$
is integrable at zero. So if we define
\begin{equation}\label{tildef}
\tilde f(t)=
\left\{
  \begin{array}{ll}
    f(t)+\frac{1}{4t}\mathcal L_p(t)^2, & \text{for}\,\, 0\leq t\leq e_p^{-1} \\
    0, & \text{for}\,\, t\geq e_p^{-1},
  \end{array}
\right.
\end{equation}
then it satisfies \eqref{Ex:24}. Let now $h(t)$ be a smooth function with the properties:
\begin{equation}\label{h}
h(t)=
\left\{
  \begin{array}{ll}
    t, & \text{for}\,\, 0\leq t\leq \frac{d_0}{2} \\
    \frac34 d_0, & \text{for}\,\, t\geq d_0,
  \end{array}
\right.
\end{equation}
and $0 <h'(t)\leq 1$ for all $0<t<d_0$. Here $d_0 \leq \min \{
e_p^{-1}, d_{\Omega} \}$ and in addition is suficiently small such
that for $t\in (0, d_0)$
\begin{equation}\label{d0<}
1-\frac{1}{2}\mathcal L_p(t)-t \tilde f(t) \geq \frac{2}{3}.
\end{equation}
We claim that
\begin{equation}\label{Gp}
G_p(t)=\ln h(t)+\frac12\cdot\sum_{j=2}^p L_j(h(t))+\int_{h(t)} \tilde f(u)\,du\,,
\end{equation}
satisfies all the needed conditions.

To check that $G_p$ satisfies ($\Sigma$), first note that, for any $k\geq1$,
\begin{equation*}
L_k'(t)=-\frac{1}{t}\Bigl(\prod_{j=1}^{k-1}L_j(t)\Bigr)^{-1}\,,
\end{equation*}
and so, for $t\in (0,d_0)$
\begin{equation}\label{Gpprim}
G_p'(t)=\frac{1}{h(t)}\cdot\Bigl[1-\frac12\mathcal L_p(h(t))-h(t)\tilde f(h(t))\Bigr]h'(t)\,,
\end{equation}
while for $t \geq d_0$, $G_p'(t)=0$. Then ($\Sigma.1$)
follows from \eqref{Gpprim} , \eqref{d0<} and the properties of $h(t)$.

To check ($\Sigma.2$), note that from \eqref{Gp} for $t
<\frac{d_0}{2}$ (take into account that $d_0 <1$ and for $t
<\frac{d_0}{2}$, $h(t)=t$)
\begin{equation*}
e^{-2G_p(2^{-n}\rho_0)}\geq 4^n \rho_0^{-2}e^{-2\int_{0+} \tilde f(u)\,du}\cdot \Bigl(\prod_{j=1}^{p-1}
L_j (2^{-n}\rho_0)\Bigr)^{-1}
\end{equation*}
If we define, for $x\in\IR$ large enough, the log-log functions $\ln_0(x)=x$,
$\ln_k(x)=\ln\bigl(\ln_{k-1}(x)\bigr)$, then note that for all $1 \leq j\leq p-1$ and
$n\geq N(\rho_0)= \frac{1}{1-\ln 2}\ln \rho_{0}^{-1}$ (remember that $2\rho_0 <
e_{p}^{-1}$)
\begin{equation*}
L_j(2^{-n}\rho_0)=\ln_j(2^n\rho_0^{-1})=\ln_{j-1}(n\ln2+\ln \rho_0^{-1})\leq \ln_{j-1} n\,.
\end{equation*}
But then
\begin{equation*}
\sum_{n=0}^\infty 4^{-n} e^{-2G_p(2^{-n}\rho_0)} \geq \text{const.} \sum_{n=N(\rho_0)}^\infty
\frac{1}{n\ln(n)\ln_2(n)\cdots\ln_{p-2}(n)}=+\infty\,,
\end{equation*}
where the divergence of the latter series is an elementary
consequence of the integral test. Since $ \sup_{t\geq
\frac{d_0}{2}} G'_p(t)^2 < \infty$, in view of the remark
following Theorem~\ref{T:p}, all that remains to be done in order
to apply Theorem~\ref{T:1} is to show that for $t\in
(0,\frac{d_0}{2})$ \eqref{E:3p} implies \eqref{E:3} with
$G(t)=G_p(t)$. Taking into account \eqref{Gpprim} it is sufficient
to check that for $t\in (0,\frac{d_0}{2})$:

\begin{equation}
\frac{1}{t^2}-\frac{1}{t^2}\mathcal L_p(t)-\frac{1}{t}f(t) \geq \frac{1}{t^2} (1-\frac{1}{2}
\mathcal L_p(t)-t\tilde f(t))^2
\end{equation}
Doing the algebra one gets the condition
$$
-\frac{f(t)}{t} \geq -2\frac{\tilde f(t)}{t} +\frac{\mathcal L_p(t)^2}{4t^2} +
(t\tilde f(t)+\mathcal L_p(t))\frac{\tilde f(t)}{t}
$$
Now taking into account  \eqref{tildef} and that from \eqref{d0<},
 for $t\in (0,\frac{d_0}{2})$, $\mathcal L_p \leq \frac 23$,
$t\tilde f(t) \leq \frac13$ one has
$$
-2\frac{\tilde f(t)}{t} +\frac{\mathcal L_p(t)^2}{4t^2} +(t\tilde f(t)+\mathcal L_p(t))\frac{\tilde f(t)}{t}
\leq -\frac{\tilde f(t)}{t} +\frac{\mathcal L_p(t)^2}{4t^2} = -\frac{ f(t)}{t},
$$
and the proof is finished.
\end{proof}

Finally we turn to the proof of our optimality theorem:
\begin{proof}[Proof of Theorem~\ref{T:pOptimal}]
In order to achieve this, we will work in 1 dimension, on the interval $(0,1)$,
 and construct such a potential close to 0. In this case, let $\alpha\in\IR$ and consider the wave function
\begin{equation}\label{E:31}
\psi_{p,\alpha}(x)=x^{-\frac12}\cdot\biggl(\prod_{j=1}^{p-1} L_j(x)\biggr)^{-\frac12}\cdot L_p(x)^\alpha\,.
\end{equation}
First note that $\psi_{p,\alpha}$ grows as $x\to0+$ for all $\alpha\in\IR$, but that
$$
\int_{0+} \psi_{p,\alpha}^2(x)\,dx=\infty\qquad\Longleftrightarrow\qquad \alpha\geq-\frac12\,.
$$
A direct calculation shows that
\begin{equation}\label{E:32}
\psi_{p,\alpha}''(x)=V_{p,\alpha}(x)\psi_{p,\alpha}(x)\,,
\end{equation}
with
\begin{equation}\label{E:33}
V_{p,\alpha}(x)=\frac34\cdot\frac{1}{x^2} - \frac{1}{x^2}\cdot\sum_{j=1}^{p-1}\biggl(\prod_{k=1}^j L_k(x)
\biggr)^{-1}+\bigl(2\alpha+o(1)\bigr)\cdot\frac{1}{x^2}\cdot\biggl(\prod_{j=1}^p L_j(x)\biggr)^{-1}
\end{equation}
where the $o(1)$ comes from a sum of terms which are of lower
order (in the same spirit as in the previous proof). In this case,
they are
\begin{align*}
&\frac14\cdot\frac{1}{x^2} \cdot\biggl(\sum_{j=1}^{p-1}\biggl(\prod_{k=1}^j L_k(x)\biggr)^{-1}\biggr)^2+\frac{\alpha^2}{x^2}\cdot \biggl(\prod_{j=1}^p L_j(x)\biggr)^{-2}\\
                         &-\frac{\alpha}{x^2}\cdot\biggl(\sum_{j=1}^{p-1}\biggl(\prod_{k=1}^j L_k(x)\biggr)^{-1}\biggr)\cdot\biggl(\prod_{j=1}^p L_j(x)\biggr)^{-1}\\
                         &+\frac12\cdot\frac{1}{x^2}\sum_{j=1}^{p-1} \sum_{k=1}^j \biggl(\prod_{l=1}^j L_l(x)\biggr)^{-1}\biggl(\prod_{l=1}^k L_l(x)\biggr)^{-1}\\
                         &-\frac{\alpha}{x^2}\sum_{k=1}^p \biggl(\prod_{l=1}^p L_l(x)\biggr)^{-1}\biggl(\prod_{m=1}^k L_m(x)\biggr)^{-1}
\end{align*}
Further note that the other (decreasing at $0+$) solution of
\begin{equation*}
\phi_{p,\alpha}''(x)=V_{p,\alpha}(x)\phi_{p,\alpha}(x)
\end{equation*}
is given by the usual relation
\begin{equation*}
\phi_{p,\alpha}(x)=\psi_{p,\alpha}(x)\cdot\int_0^x \frac{1}{\psi_{p,\alpha}^2(y)}\,dy\,.
\end{equation*}
Since $\psi_{p,\alpha}^{-2}(y)\sim y^{1-\epsilon}$ as $y\to0+$ for
any given $\epsilon>0$, we see that $\phi_{p,\alpha}(x)\to0$ as
$x\to0+$, and so in particular $\phi_{p,\alpha}$ and
$\psi_{p,\alpha}$ are indeed two independent solutions. But for
$\alpha<-\frac12$, they are both in $L^2(0+)$ and so we are in the
limit-circle case and
\begin{equation*}
H_{p,\alpha}=-\Delta+V_{p,\alpha}
\end{equation*}
is \textit{not} essentially self-adjoint on $(0,1)$. But this is
exactly the type of potential we were looking for: given a
constant $c>1$, pick an $\alpha<-\frac{c}{2}<-\frac12$. Thus
$H_{p,\alpha}$ is not essentially self-adjoint, but for $x$ close
enough to the boundary $\partial\Omega$, equation~\eqref{E:33}
together with our choice of $\alpha$ implies that $V_{p,\alpha}$
satisfies \eqref{E:4p}, as claimed in the theorem.

Finally, the potentials $V_{p,\alpha}$ can also be used in several space dimensions to construct counterexamples
\end{proof}

%%%%%%%%%%%%%%%%%%%%%%%%%%%%%%%%%%%%%%%%%%%%%%%%%%%%%%%%%%%%%%%%%

\end{document}